\documentclass[letterpaper,journal,twoside]{IEEEtran}
\usepackage{amsmath,amsfonts,amsthm,amssymb}
\usepackage{algorithmic}
\usepackage{algorithm}
\usepackage{array}
\usepackage[caption=false,font=footnotesize,labelfont=rm,textfont=rm]{subfig}
\usepackage{textcomp}
\usepackage{stfloats}
\usepackage{url}
\usepackage{xurl}
\usepackage{verbatim}
\usepackage{graphicx}
\usepackage{cite}
\usepackage{bm}
\usepackage{color}
\usepackage{booktabs}
\usepackage{tikz}
\usepackage{pgfplots}
\usetikzlibrary{positioning,arrows.meta}
\pgfplotsset{compat=1.18}
\newtheorem{theorem}{Theorem}[section]
\newtheorem{lemma}{Lemma}[section]

\newtheorem{proposition}{Proposition}[section]

\newtheorem{corollary}{Corollary}[section]

\renewcommand{\thetheorem}{\arabic{section}.\arabic{theorem}}
\renewcommand{\theremark}{\arabic{section}.\arabic{remark}}
\renewcommand{\theproposition}{\arabic{section}.\arabic{proposition}}
\renewcommand{\theassumption}{\arabic{section}.\arabic{assumption}}
\renewcommand{\thelemma}{\arabic{section}.\arabic{lemma}}
\renewcommand{\thecorollary}{\arabic{section}.\arabic{corollary}}

\begin{document}

\title{A General Performance Analysis Framework for Bitwise Neural Polar Decoding}

\author{Hongzhi Zhu,~ 
        % Wei Xu\IEEEauthorrefmark{1},~\IEEEmembership{Fellow,~IEEE,}~
        % and Xiaohu You\IEEEauthorrefmark{1},~\IEEEmembership{Fellow,~IEEE,}
        Wei Xu,~\IEEEmembership{Fellow,~IEEE,}~
        and Xiaohu You,~\IEEEmembership{Fellow,~IEEE}
        % <-this % stops a space
%
\thanks{H. Zhu is with the National Mobile Communications Research Laboratory (NCRL), Southeast University, Nanjing, 210096, and the Purple Mountain Laboratories, Nanjing, 210096, China (email: hz\_zhu@seu.edu.cn).}%
\thanks{W. Xu and X. You are with the National Mobile Communications Research Laboratory (NCRL), Southeast University, Nanjing, 210096, China, and the Purple Mountain Laboratories, Nanjing, 210096, China (email: \{wxu, xhyu\}@seu.edu.cn). X. You is also the corresponding author of this paper.}
}

% The paper headers
\markboth{}%
{H. Zhu \MakeLowercase{\textit{et al.}}}

\maketitle

\begin{abstract}
This paper investigates when the population mean-squared error (MSE) of a bitwise neural predictor is sufficient to guarantee a bit-error rate (BER) close to the successive-cancellation (SC) reference performance. Each synthesized message channel is formulated as a soft maximum a posteriori (MAP) regression problem. Architecture-independent direct and posterior-margin MSE-to-BER bounds then convert certified population MSE into an SC-referenced reliability requirement, while an independent held-out certificate makes the population condition verifiable from data. A three-layer over-parameterized neural network (ONN) decoder, comprising an input layer, one over-parameterized rectified linear unit (ReLU) hidden layer, and a fixed-sign output layer, provides a constructive instance with provable empirical MSE convergence under explicit full-batch gradient-descent conditions. A $(128,64)$ shared-message-update polar factor-graph neural decoder is further evaluated over additive white Gaussian noise (AWGN) and block-Rayleigh channels to examine architecture-independent certification and engineering performance. The resulting reliability-certification framework connects trainability, population regression accuracy, and communication reliability. It provides verifiable sufficient conditions for an SC-referenced BER guarantee and its associated block-error rate (BLER) characterization.
\end{abstract}

\begin{IEEEkeywords}
Polar codes, neural decoding, posterior margin, soft-MAP regression, over-parameterized neural networks, held-out certificate, block-Rayleigh fading.
\end{IEEEkeywords}

\section{Introduction}
Polar codes are the first class of channel codes that are provably capacity-achieving for binary-input memoryless channels~\cite{ref5}, and they have become an important coding component in modern wireless systems~{\cite{ref8,ref9,refxu2,Celik2024AI}}. Their channel-polarization structure creates synthesized bit-channels with different reliabilities and enables efficient successive decoding~\cite{ref6}. Deep learning (DL)-assisted polar decoders can achieve competitive bit-error rate (BER) on finite simulation or test sets. Such empirical curves, however, do not by themselves establish that a trained decoder satisfies a specified BER requirement under the underlying channel distribution.

This gap motivates a reliability-certification framework for learning-enabled polar decoders. The central question is whether sufficiently small population mean-squared error (MSE) can guarantee BER close to the successive-cancellation (SC) reference performance. If such a conversion is available, one can first identify the population-MSE level required by an engineering BER requirement and then select a neural construction whose training dynamics can reach the corresponding regression regime. In this paper, a three-layer over-parameterized neural network (ONN) decoder supplies that constructive convergence result.

The analysis separates four distinct statements. First, empirical MSE convergence characterizes optimization on a fixed training set. Second, an independent held-out certificate establishes a high-probability upper bound on population MSE. Third, architecture-independent MSE-to-BER bounds convert the certified population quantity into an SC-referenced reliability guarantee. {Fourth, sequential end-to-end evaluation quantifies the BER and block-error rate (BLER) of the implemented decoder.} The first statement supplies a trainable construction, the second and third statements establish the engineering guarantee, and the fourth statement characterizes the resulting decoder behavior.

\subsection{Learning-Based Polar Decoders}

DL-assisted decoders have been widely investigated as nonlinear alternatives or enhancements to conventional decoding algorithms~\cite{ref9.1}. Architectures that integrate learning with communication receivers include joint equalization-decoding designs~\cite{refxu1,Joleini2025polartheorem}. For example, multilayer perceptron (MLP)-based neural decoders have been studied for short linear codes~\cite{ref12.5}, while trainable components have been incorporated into SC and belief-propagation-flip polar decoding~\cite{Ha2019deeppolar,ref15}. These studies primarily establish decoder performance through numerical evaluation. The present work instead derives a verifiable sufficient condition that connects population regression accuracy to an SC-referenced BER requirement, thereby positioning reliability certification as the principal contribution.

\subsection{Polar Decoding and Its Reliability Interface}
SC decoding follows the synthesized bit-channel structure of polar codes~\cite{ref6}. Its information-bit decision, evaluated on the synthesized-channel evidence, is the conditional bitwise maximum a posteriori (MAP) decision and therefore provides the natural reference performance for a bitwise neural predictor. Successive cancellation list (SCL) decoding~\cite{ref12.1, ref12.2}, successive cancellation stack (SCS) decoding~\cite{ref12.3}, and cyclic redundancy check (CRC)-aided SCL decoding~\cite{ref7} provide widely used multipath extensions. The present framework uses the SC synthesized-channel decision as a reliability interface. It converts the population regression error of a learned soft posterior predictor into an explicit BER tolerance relative to that reference.

\subsection{Three-Layer ONN Decoder as an MSE-Convergent Construction}

ONN theory makes the optimization of sufficiently wide shallow networks analytically tractable~\cite{So2019theory,zhang2021ONN}. In particular, gradient descent can optimize over-parameterized rectified linear unit (ReLU) networks at a provable rate~\cite{du2018gradient}. Complementary generalization analyses relate gradient-based optimization, norm-controlled neighborhoods, and localized complexity to population behavior~\cite{cao2019generalization,cao2020generalization,bartlett2021statview,bartlett2002rademacher,bartlett2005local}. Together, these results provide a theoretical basis for separating trainability and local parameter evolution from population reliability.

We use a three-layer ONN decoder as a constructive realization of the MSE-convergence component. The layer count comprises the input layer, one ReLU hidden layer, and the fixed-sign output layer. Its convergence theorem establishes empirical MSE contraction and row-wise locality under explicit width and learning-rate conditions. Independent held-out data then certify whether the trained predictor reaches the population-MSE threshold required by the architecture-independent BER conversion. The ONN result therefore supplies trainability, while the population certificate and reliability bridge determine whether the communication requirement is satisfied.

\subsection{Contributions}
The principal contributions are summarized as follows.

\begin{itemize}
    \item An architecture-independent reliability bridge converts certified population MSE into a specified SC-referenced BER tolerance. Direct and posterior-margin forms quantify the required regression accuracy, and their per-bit guarantees extend to BER and first-error BLER aggregation.

    \item A three-layer ReLU ONN decoder with fixed random output signs provides a constructive realization of the empirical MSE-convergence requirement. Explicit width and learning-rate conditions establish Gram stability, linear MSE contraction, and row-wise locality. Independent held-out data subsequently certify the population-MSE threshold used by the BER guarantee.

    \item Short-code evaluations characterize the convergence behavior of the constructive three-layer ONN decoder. A separate $(128,64)$ neural polar decoder, denoted N128, evaluates architecture-independent reliability certification and engineering performance at a longer block length.
\end{itemize}

\subsection{Paper Outline}
{The remainder of the paper is organized as follows. Section~\ref{sec_preliminaries} specifies the polar system and generic bitwise predictor. Section~\ref{sec_theory} develops the architecture-independent MSE-to-BER conversion, held-out population certificate, margin specialization, BER/BLER aggregation, and SC-referenced reliability composition. Section~\ref{sec_onn_instance} constructs the three-layer ONN decoder and instantiates the resulting reliability guarantee. Section~\ref{sec_experiments} presents the short-code convergence and reliability evaluations together with the N128 assessment of certification and engineering performance. Section~\ref{sec_complexity} analyzes complexity and scalability, and Section~\ref{sec_conclusion} concludes the paper.}

\subsection{Notations}

In this paper, $[n]\triangleq\{1,2,\ldots,n\}$ denotes the index set of the first $n$ positive integers. For a given set $\mathcal{A}$, the uniform distribution over $\mathcal{A}$ is denoted by ${\rm Unif}(\mathcal{A})$. A multivariate Gaussian distribution with mean vector $\bm{\mu}$ and covariance matrix $\mathbf{\Sigma}$ is denoted by $\mathcal{N}(\bm{\mu},\mathbf{\Sigma})$. For an $N$-dimensional vector $\bm{x}=[x_1,\dots,x_N]$, we use $\bm{x}[i:j]$ with $1\le i\le j\le N$ to denote the subvector $[x_i,\dots,x_j]$. Given two vectors $\bm{u}$ and $\bm{v}$, $[\bm{u},\bm{v}]$ denotes their concatenation. For a matrix $\mathbf A=[\bm a_1,\ldots,\bm a_B]$ whose columns are $\bm a_j$, define $\|\mathbf A\|_{2,\infty}\triangleq\max_{j\in[B]}\|\bm a_j\|_2$ and $\|\mathbf A\|_F^2\triangleq\sum_{j=1}^{B}\|\bm a_j\|_2^2$. Thus $\|\mathbf A\|_F\le\sqrt B\|\mathbf A\|_{2,\infty}$. For an event $E$, $\mathbb{I}(E)$ denotes its indicator function and $\mathbb{P}(E)$ denotes its probability.

% Additional notations specific to the methodology are listed in Table \ref{notation}, which supports clarity in subsequent model formulations.
% \input{notations}

% and the number of elements inside $A$ is denotes as $\vert A \vert$

\section{Polar Decoding and Generic Bitwise Prediction Model}
\label{sec_preliminaries}
This section first specifies the polar-coded communication and channel models. We then recall the polar construction and the synthesized-channel quantities that define the SC reference. Finally, we formulate polar decoding as a collection of generic bitwise regression problems.

\subsection{System Model}

Fig.~\ref{fig_system} shows the polar-coded communication system considered in this paper. Let $\mathcal{A}_K\subset[N]$ denote the index set of the $K$ message-bearing synthesized channels, and let $\mathcal{A}_K^{\mathrm c}$ denote its complement. Let $\bm{m}\in\{0,1\}^{K}$ denote the binary message sequence of length $K$. Following the channel-polarization principle~\cite{ref5}, $\bm{m}$ is embedded into a length-$N$ source vector $\bm{u}\in\{0,1\}^{N}$ by placing the message bits on $\mathcal{A}_K$ and filling the positions in $\mathcal{A}_K^{\mathrm c}$ with frozen bits. The resulting vector $\bm{u}$ is then encoded into a polar codeword $\bm{x}\in \mathcal{C}_{\mathrm{code}}^{N,K}\subseteq\{0,1\}^{N}$, where $\mathcal{C}_{\mathrm{code}}^{N,K}$ denotes the $(N,K)$ polar codebook and the code rate is $R=K/N$. The coded bits are modulated using binary phase-shift keying (BPSK), i.e., $s_j = 1-2x_j$ for $j\in[N]$, which yields the transmitted signal $\bm{s}\in\{\pm1\}^{N}$.

Two channel models are used in the engineering evaluation. For additive white Gaussian noise (AWGN) transmission,
\begin{equation}
\bm y=\bm s+\bm n,
\qquad
\bm n\sim\mathcal N(\bm 0,\sigma_{\mathrm n}^2\mathbf I_N),
\label{eq_awgn_channel_v2}
\end{equation}
and the channel log-likelihood ratio (LLR) vector is $\bm\ell=2\bm y/\sigma_{\mathrm n}^2$. For block-Rayleigh fading,
\begin{equation}
y_j=h s_j+n_j,\quad j\in[N],
\qquad
\mathbb E[h^2]=1,
\label{eq_rayleigh_channel_v2}
\end{equation}
where one real nonnegative Rayleigh amplitude $h$ is drawn per codeword and is known perfectly at the receiver, corresponding to perfect channel state information (CSI). Conditioned on $h$, the corresponding channel LLR is
\begin{equation}
\ell_j=\frac{2hy_j}{\sigma_{\mathrm n}^2}.
\label{eq_rayleigh_llr_v2}
\end{equation}
Thus, the neural and algebraic decoders receive the same channel sufficient statistic under each channel model.

At the receiver, decoding is performed in a bitwise manner. Frozen bits are directly assigned to their known values, whereas each message bit is estimated by a dedicated predictor using the channel LLRs together with the previously available bit decisions. After all bits are recovered, the estimated source vector is denoted by $\hat{\bm u}\in\{0,1\}^N$, from which the estimated message sequence $\hat{\bm m}\in\{0,1\}^K$ is obtained by extracting the entries indexed by the message set. Decoding performance is evaluated in terms of both BER and BLER.

\begin{figure}[!t]
\centering
\resizebox{\columnwidth}{!}{\begin{tikzpicture}[font=\scriptsize,>=latex]
\tikzset{
systembox/.style={draw,rounded corners=2pt,minimum width=1.72cm,minimum height=0.72cm,align=center,fill=white,line width=0.6pt},
signal/.style={font=\scriptsize}
}

\node[systembox,draw=blue!65!black] (source) {Message placement\\and frozen bits};
\node[systembox,draw=blue!65!black,right=0.32cm of source] (encoder) {Polar\\encoder};
\node[systembox,draw=blue!65!black,right=0.32cm of encoder] (modulator) {BPSK\\modulator};
\node[systembox,draw=blue!65!black,right=0.32cm of modulator] (channel) {AWGN or block-\\Rayleigh channel};

\draw[->,line width=0.6pt] ([xshift=-0.75cm]source.west) -- node[signal,above] {$\bm m$} (source.west);
\draw[->,line width=0.6pt] (source) -- node[signal,above] {$\bm u$} (encoder);
\draw[->,line width=0.6pt] (encoder) -- node[signal,above] {$\bm x$} (modulator);
\draw[->,line width=0.6pt] (modulator) -- node[signal,above] {$\bm s$} (channel);

\node[systembox,draw=orange!80!black,below=0.78cm of channel] (llr) {LLR\\computation};
\node[systembox,draw=orange!80!black,left=0.32cm of llr,minimum width=2.05cm] (decoder) {Successive bitwise\\decoder};
\node[systembox,draw=orange!80!black,left=0.32cm of decoder] (extractor) {Message-bit\\extraction};

\draw[->,line width=0.6pt] (channel.south) -- node[signal,right,align=left] {$\bm y$\\and CSI} (llr.north);
\draw[->,line width=0.6pt] (llr) -- node[signal,above] {$\bm\ell$} (decoder);
\draw[->,line width=0.6pt] (decoder) -- node[signal,above] {$\hat{\bm u}$} (extractor);
\draw[->,line width=0.6pt] (extractor.west) -- node[signal,above] {$\hat{\bm m}$} ++(-0.75cm,0);

\draw[->,line width=0.6pt] ([xshift=-0.28cm]decoder.south) -- ++(0,-0.38cm) -| node[signal,pos=0.30,below] {previous decisions $\hat{\bm u}^{i-1}$} ([yshift=-0.18cm]decoder.east);
\draw[->,line width=0.6pt,dashed] (source.south) |- node[signal,pos=0.72,above] {known frozen bits} (decoder.north west);
\end{tikzpicture}}
\caption{Polar-coded communication system with successive bitwise decoding. The receiver converts the channel observations and available CSI into LLRs, and each message-bit predictor uses the LLR vector together with the preceding decisions.}
\label{fig_system}
\end{figure}

\subsection{Polar Construction and Synthesized-Channel Reliability}\label{polar_code}

We consider polar codes with block length $N=2^n$, where $n\in\{1,2,3,\ldots\}$, whose fundamental building block is the $2\times 2$ binary polarization kernel
\begin{equation}
\label{polar_kernel}
\mathbf{F}\triangleq
\begin{bmatrix}
1 & 0 \\
1 & 1
\end{bmatrix}.
\end{equation}
According to the channel-polarization principle~\cite{ref5}, $N$ independent uses of a binary-input channel are combined and then split into $N$ synthesized bit-channels with different reliabilities. As $N$ increases, these synthesized channels polarize into either highly reliable or highly unreliable channels, which forms the basis of polar-code construction.

The message set $\mathcal{A}_K$ is selected as the $K$ most reliable synthesized channels. The source vector satisfies $\bm{u}_{\mathcal{A}_K}=\bm{m}$ and $\bm{u}_{\mathcal{A}_K^{\mathrm c}}=\bm{0}$, where the latter entries are frozen bits known to both the encoder and decoder~\cite{ref6}.

The polar codeword $\bm{x}$ is then generated from $\bm{u}$ as
\begin{equation}
\label{polar_encode}
\bm{x}=\bm{u}\mathbf{G}_N,
\end{equation}
where the generator matrix $\mathbf{G}_N$ is constructed recursively as
\begin{equation}
\label{polar_encode2}
\mathbf{G}_N=
\begin{cases}
\mathbf{F}, & N=2,\\
\mathbf{B}_N\bigl(\mathbf{F}\otimes \mathbf{G}_{N/2}\bigr), & N>2,
\end{cases}
\end{equation}
with $\mathbf{B}_N$ denoting the bit-reversal permutation matrix and $\otimes$ denoting the Kronecker product. Since $\mathbf{B}_N$ permutes the bit indices without affecting decoding performance~\cite{ref5}, \eqref{polar_encode} can be equivalently written as
\begin{equation}
\label{polar_encode3}
\bm{x}=\bm{u}\mathbf{F}^{\otimes n},
\end{equation}
where $\mathbf{F}^{\otimes n}$ is the $n$-th Kronecker power of $\mathbf{F}$.

The synthesized-channel LLR summarizes the evidence used by the corresponding SC bit decision. Its sign determines the conditional bitwise decision, while its magnitude characterizes reliability. Section~\ref{subsec_ga_alpha} introduces the AWGN Gaussian approximation (GA) specialization used later to interpret low-margin behavior. The architecture-independent reliability results do not require this approximation.

\subsection{General Bitwise Neural Predictor}\label{bitwise_nn_model}

For each message-bit position $i\in\mathcal A_K$, let $\mathrm F_i(\bm z_i;\bm\theta_i)\in\mathbb R$ denote an arbitrary trained predictor. The parameter vector $\bm\theta_i$ may correspond to a shallow or deep network and may be obtained using fixed or dynamically generated training samples. The architecture-independent results depend on the trained predictor and its population regression error.

For theoretical analysis, we define the input of the $i$-th bit-channel decoder as
\begin{equation}
\label{eq_bitwise_input}
\bm z_i \triangleq [\bm\ell,\hat{\bm u}^{i-1}],
\end{equation}
where $\hat{\bm u}^{i-1} = \hat{\bm u}[1:i-1]$ denotes the previous $i-1$ bit decisions.

For frozen-bit positions, the decoder directly outputs the known frozen value. For message-bit positions, the general hard decision is $\hat u_i=\mathbb I(\mathrm F_i(\bm z_i;\bm\theta_i)\ge0)$. The posterior-reference mode constructs $\bm z_i$ from the transmitted previous bits. The sequential decoder constructs it from its previous decisions.

\subsubsection{Architecture-Independent Soft-MAP Target}
Instead of directly learning the hard bit value $u_i\in\{0,1\}$, we adopt the soft bitwise MAP quantity as the training target. For each message-bit position $i\in\mathcal A_K$, define
\begin{equation}
\label{eq_soft_target}
r_i^\star(\bm z_i)
\triangleq
2\,\mathbb P(u_i=1\mid \bm z_i)-1
=
\mathbb E[\,2u_i-1\mid \bm z_i\,].
\end{equation}
This quantity is the posterior mean of the bipolar variable $2u_i-1$ conditioned on $\bm z_i$, and its sign coincides with the bitwise MAP decision rule. In particular, the optimal bitwise MAP detector is
\begin{equation}
\label{eq_map_decision}
\mathrm{g}_i^\star(\bm z_i)
=
\mathbb I\big(r_i^\star(\bm z_i)\ge 0\big).
\end{equation}

The corresponding posterior LLR is defined as
\begin{equation}
\label{eq_bitwise_llr}
L_i(\bm z_i)
\triangleq
\log \frac{\mathbb P(u_i=0\mid \bm z_i)}{\mathbb P(u_i=1\mid \bm z_i)}.
\end{equation}
Then, from \eqref{eq_soft_target} and \eqref{eq_bitwise_llr}, we have
\begin{equation}
\label{eq_s_llr_relation}
r_i^\star(\bm z_i)
=
-\tanh\!\left(\frac{L_i(\bm z_i)}{2}\right).
\end{equation}
Accordingly, the quantity
\begin{equation}
\label{eq_margin_def}
\gamma_i(\bm z_i)\triangleq |r_i^\star(\bm z_i)|
\end{equation}
can be interpreted as the posterior margin of the $i$-th bit-channel, which quantifies the confidence of the bitwise MAP decision.

\subsubsection{Population Regression Error}
Let $\mathcal D_i$ denote the induced data distribution of the $i$-th bit-channel over $(\bm Z_i,T_i)$, where $T_i=r_i^\star(\bm Z_i)$. For an arbitrary trained predictor, the population MSE is defined as
\begin{equation}
\label{eq_generalization_mse_i}
\mathcal E_{\mathcal D_i}(\mathrm F_i)
=
\mathbb E_{(\bm Z_i,T_i)\sim \mathcal D_i}
\left[
\big(
\mathrm F_i(\bm Z_i;\bm\theta_i)-T_i
\big)^2
\right].
\end{equation}

\subsubsection{General Bit Decision Rule}
After training, the hard estimate of the $i$-th information bit is obtained by thresholding the trained predictor:
\begin{equation}
\label{eq_hard_decision_i}
\hat u_i
=
\mathbb I\big(
\mathrm F_i(\bm z_i;\bm\theta_i)\ge 0
\big).
\end{equation}
From~\eqref{eq_map_decision} and~\eqref{eq_s_llr_relation}, the neural predictor makes the same decision as the bitwise MAP detector whenever $\mathrm F_i(\bm z_i;\bm\theta_i)$ and $r_i^\star(\bm z_i)$ have the same sign. This sign relation is the key architecture-independent link between regression error and bitwise error probability.

The bitwise formulation converts polar decoding into supervised regression tasks indexed by synthesized channels. Section~\ref{sec_theory} next determines when the population regression error is sufficiently small to satisfy an SC-referenced BER requirement.

\section{Architecture-Independent MSE-to-BER Framework}
\label{sec_theory}

This section establishes the reliability bridge at the center of the framework. It converts the population MSE of any trained bitwise predictor into excess BER relative to the SC reference and makes the required population quantity verifiable through independent held-out data. These architecture-independent results specify the regression accuracy required by a communication reliability target before a neural construction is selected. Unless otherwise specified, all results concern message-bit positions $i\in\mathcal A_K$.

\subsection{Universal Excess-BER Bound}

For a trained predictor $\mathrm F_i(\bm Z_i;\bm\theta_i)$, the neural network (NN) decision is obtained by thresholding the predictor at zero. Write $\mathcal E_{\mathcal D_i}\triangleq\mathcal E_{\mathcal D_i}(\mathrm F_i)$ for its population MSE, namely
\begin{equation}
\mathcal E_{\mathcal D_i}
\triangleq
\mathbb E\!\left[
\big(\mathrm F_i(\bm Z_i;\bm\theta_i)-r_i^\star(\bm Z_i)\big)^2
\right].
\end{equation}
Let $P_{{\rm e},i}^{\rm NN}$ and $P_{{\rm e},i}^{\rm MAP}$ denote the bitwise error probabilities of the neural and MAP decisions, respectively.

\begin{proposition}
\label{prop_direct_excess_ber}
For any trained predictor with population MSE $\mathcal E_{\mathcal D_i}$,
\begin{equation}
\label{eq_direct_excess_ber}
P_{{\rm e},i}^{\rm NN}
-
P_{{\rm e},i}^{\rm MAP}
\le
\sqrt{\mathcal E_{\mathcal D_i}}.
\end{equation}
\end{proposition}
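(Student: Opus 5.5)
We will prove the bound with the standard plug-in classification argument: first compute the pointwise excess error of the thresholded predictor conditioned on $\bm Z_i$, then average and apply Cauchy--Schwarz (equivalently Jensen).

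\textbf{Step 1: pointwise excess error.} Fix $\bm z_i$ and write $p(\bm z_i)=\mathbb P(u_i=1\mid\bm z_i)$, so that $r_i^\star=2p-1$. The neural decision $\hat u_i=\mathbb I(\mathrm F_i\ge0)$ is a deterministic function of $\bm z_i$. Its conditional error probability is therefore $1-p$ if $\hat u_i=1$ and $p$ if $\hat u_i=0$. The MAP decision $\mathrm g_i^\star$ attains the conditional error $\min(p,1-p)$. If the two decisions agree, the conditional excess is zero. If they disagree, the conditional excess is $|1-2p|=|r_i^\star(\bm z_i)|=\gamma_i(\bm z_i)$. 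Hence
\begin{equation*}
P_{{\rm e},i}^{\rm NN}-P_{{\rm e},i}^{\rm MAP}
=\mathbb E\!\left[\,|r_i^\star(\bm Z_i)|\,\mathbb I\big(\hat u_i\neq \mathrm g_i^\star(\bm Z_i)\big)\right].
\end{equation*}

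\textbf{Step 2: sign disagreement implies the regression error dominates the margin.} Suppose the decisions disagree. Then either $\mathrm F_i\ge0>r_i^\star$ or $\mathrm F_i<0\le r_i^\star$. In both cases $\mathrm F_i$ and $r_i^\star$ lie on opposite sides of zero, with $0$ included between them, so $|r_i^\star|\le|\mathrm F_i-r_i^\star|$. The ties at zero need care: when $r_i^\star=0$ the conditional excess is $0$ anyway, and the convention ``$\ge$'' is the same in \eqref{eq_map_decision} and \eqref{eq_hard_decision_i}. Consequently
\begin{equation*}
|r_i^\star|\,\mathbb I(\hat u_i\neq \mathrm g_i^\star)\le |\mathrm F_i(\bm Z_i;\bm\theta_i)-r_i^\star(\bm Z_i)|
\end{equation*}
holds pointwise.

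\textbf{Step 3: averaging.} Taking expectations and applying Jensen's inequality gives
\begin{equation*}
P_{{\rm e},i}^{\rm NN}-P_{{\rm e},i}^{\rm MAP}\le \mathbb E|\mathrm F_i-r_i^\star|\le\sqrt{\mathbb E(\mathrm F_i-r_i^\star)^2}=\sqrt{\mathcal E_{\mathcal D_i}},
\end{equation*}
which is the bound in Proposition~\ref{prop_direct_excess_ber}.

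\textbf{Main obstacle.} The main point of care is Step 1, not the algebra. The neural decision must be measurable with respect to $\bm Z_i$, and both error probabilities must be evaluated under the same distribution $\mathcal D_i$, i.e., with $\bm z_i$ built from the true previous bits as in the posterior-reference mode. Under that convention the conditioning argument is valid. Propagation of errors from the sequential decoder's previous decisions is a separate issue, left to the later BER/BLER aggregation.
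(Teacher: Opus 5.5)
Your proposal is correct and follows the paper's proof essentially step for step: on the sign-disagreement event the conditional excess error equals the posterior margin $\gamma_i(\bm Z_i)$, this margin is dominated by $|\mathrm F_i(\bm Z_i;\bm\theta_i)-r_i^\star(\bm Z_i)|$, and averaging with Cauchy--Schwarz (your Jensen step) yields $\sqrt{\mathcal E_{\mathcal D_i}}$. Your handling of the tie convention at zero and your requirement that both error probabilities be evaluated under the same distribution $\mathcal D_i$ supply details that the paper leaves implicit.
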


\begin{proof}
Conditioned on $\bm Z_i$, the excess error probability incurred when the neural and MAP signs disagree equals $\gamma_i(\bm Z_i)$. On this disagreement event,
\begin{equation}
\gamma_i(\bm Z_i)
\le
|\mathrm F_i(\bm Z_i;\bm\theta_i)-r_i^\star(\bm Z_i)|.
\end{equation}
Taking expectations and applying the Cauchy\text{-}Schwarz inequality establish~\eqref{eq_direct_excess_ber}.
\end{proof}

Proposition~\ref{prop_direct_excess_ber} gives the most direct answer to the central reliability question. A certified population MSE no larger than $\varepsilon^2$ guarantees an excess bitwise error probability no larger than $\varepsilon$, independently of the predictor architecture, optimizer, data-generation procedure, and channel model.

The following posterior-margin form provides a complementary decomposition of the same reliability gap.

\begin{proposition}
\label{prop_general_margin}
For any trained predictor, any training algorithm, and any $\rho\in(0,1)$,
\begin{equation}
\label{eq_general_margin_bound_v2}
P_{{\rm e},i}^{\rm NN}
\le
P_{{\rm e},i}^{\rm MAP}
+
\alpha_i(\rho)
+
\frac{\mathcal E_{\mathcal D_i}}{\rho^2},
\end{equation}
where $\alpha_i(\rho)=\mathbb P(\gamma_i(\bm Z_i)\le\rho)$.
\end{proposition}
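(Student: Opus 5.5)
We follow the route of Proposition~\ref{prop_direct_excess_ber}: express the excess error as an expectation over the sign-disagreement event, then split that event according to whether the posterior margin is small or large. Let
\[
\mathcal M_i\triangleq\{\mathrm F_i(\bm Z_i;\bm\theta_i)\text{ and }r_i^\star(\bm Z_i)\text{ induce different hard decisions}\}.
\]
The first step is to record the exact identity already used in that proof. Conditioned on $\bm Z_i$, the MAP decision errs with probability $(1-\gamma_i(\bm Z_i))/2$. On $\mathcal M_i$ the neural decision is the opposite bit, so it errs with probability $(1+\gamma_i(\bm Z_i))/2$. Hence
\[
P_{{\rm e},i}^{\rm NN}-P_{{\rm e},i}^{\rm MAP}=\mathbb E\big[\gamma_i(\bm Z_i)\,\mathbb I(\mathcal M_i)\big]\le \mathbb P(\mathcal M_i),
\]
using $\gamma_i\le 1$. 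Instead of Cauchy--Schwarz we will bound $\mathbb P(\mathcal M_i)$ directly.

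The second step is the split
\[
\mathbb P(\mathcal M_i)\le \mathbb P(\gamma_i(\bm Z_i)\le\rho)+\mathbb P\big(\mathcal M_i\cap\{\gamma_i(\bm Z_i)>\rho\}\big).
\]
The first term is $\alpha_i(\rho)$ by definition. For the second term, we reuse the geometric fact from Proposition~\ref{prop_direct_excess_ber}: on $\mathcal M_i$, $\gamma_i(\bm Z_i)\le|\mathrm F_i(\bm Z_i;\bm\theta_i)-r_i^\star(\bm Z_i)|$. The reason is that if the signs differ, or one value is zero with the tie convention $\mathbb I(\cdot\ge0)$, then $\mathrm F_i$ lies on the other side of $0$ from $r_i^\star$, and its distance to $r_i^\star$ is at least $|r_i^\star|$. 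So on the intersection $|\mathrm F_i-r_i^\star|>\rho$. Markov's inequality applied to $(\mathrm F_i-r_i^\star)^2$ then gives
\[
\mathbb P\big(|\mathrm F_i-r_i^\star|>\rho\big)\le \mathcal E_{\mathcal D_i}/\rho^2 .
\]
Adding the two terms yields~\eqref{eq_general_margin_bound_v2}. Nothing in the argument uses the architecture, the training algorithm, or the channel, so the claimed generality follows.

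The only delicate step is the tie case at zero. For example, if $r_i^\star=0$ the MAP decision is $1$ while $\mathrm F_i<0$ gives $0$, so the signs disagree even though $\gamma_i=0$. This case is harmless, since $\gamma_i=0\le\rho$ places it in the $\alpha_i(\rho)$ term, and it also contributes zero excess error. We will state the disagreement inequality for the convention in~\eqref{eq_hard_decision_i} and~\eqref{eq_map_decision} to cover it. A sharper bound could keep the factor $\gamma_i\le1$ inside the expectation, but the stated form does not need it.
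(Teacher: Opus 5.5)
Your proof is correct and follows essentially the same route as the paper: split the disagreement event into the low-margin part $\{\gamma_i(\bm Z_i)\le\rho\}$ and a high-margin part on which $|\mathrm F_i-r_i^\star|\ge\gamma_i>\rho$, then apply a union bound and Markov's inequality. The only cosmetic difference is the first step. The paper passes from the disagreement probability to $P_{{\rm e},i}^{\rm NN}$ through the event inclusion $\{\text{NN errs}\}\subseteq\{\text{MAP errs}\}\cup\mathcal M_i$. You instead use the exact identity $P_{{\rm e},i}^{\rm NN}-P_{{\rm e},i}^{\rm MAP}=\mathbb E[\gamma_i(\bm Z_i)\,\mathbb I(\mathcal M_i)]$ together with $\gamma_i\le1$.
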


\begin{proof}
Whenever the neural and MAP decisions disagree, either $\gamma_i(\bm Z_i)\le\rho$ or the regression error has magnitude at least $\rho$. A union bound and Markov's inequality give a disagreement probability no larger than $\alpha_i(\rho)+\mathcal E_{\mathcal D_i}/\rho^2$. Adding the MAP error event proves~\eqref{eq_general_margin_bound_v2}.
\end{proof}

Combining Propositions~\ref{prop_direct_excess_ber} and~\ref{prop_general_margin} yields
\begin{equation}
\label{eq_combined_ber_certificate}
\begin{aligned}
P_{{\rm e},i}^{\rm NN}
\le \min\Bigg\{1,\;P_{{\rm e},i}^{\rm MAP}
&+\min\Bigg[\sqrt{\mathcal E_{\mathcal D_i}},\\[-0.3ex]
&
\inf_{\rho\in(0,1)}
\left(
\alpha_i(\rho)+\frac{\mathcal E_{\mathcal D_i}}{\rho^2}
\right)
\Bigg]\Bigg\}.
\end{aligned}
\end{equation}
The direct square-root term gives a margin-free sufficient condition, while the posterior-margin term separates low-confidence channel outputs from predictor mismatch away from the decision boundary.

\subsection{Held-Out Population Certification and Engineering Condition}

Define the sign-preserving clipped predictor
\begin{equation}
\bar{\mathrm F}_i(\bm z_i)
\triangleq
\operatorname{clip}\big(\mathrm F_i(\bm z_i;\bm\theta_i),-1,1\big).
\end{equation}
Let $V_i\triangleq 2u_i-1\in\{-1,1\}$. Because both $\bar{\mathrm F}_i(\bm z_i)$ and $V_i$ belong to $[-1,1]$, their squared loss belongs to $[0,4]$. Let $\{(\bm Z_{i,m}^{\rm te},V_{i,m}^{\rm te})\}_{m=1}^{M_i}$ be independent of model selection and training. Define the held-out loss samples, their mean, and their sample variance as
{
\begin{align}
L_{i,m}^{\rm te}
&\triangleq
\big(
\bar{\mathrm F}_i(\bm Z_{i,m}^{\rm te})-V_{i,m}^{\rm te}
\big)^2,\label{eq_heldout_loss_sample}\\
\widehat{\mathcal R}_{i}^{\rm te}
&\triangleq
\frac{1}{M_i}
\sum_{m=1}^{M_i}
L_{i,m}^{\rm te},\label{eq_heldout_risk_mean}\\
\widehat{\mathcal V}_{i}^{\rm te}
&\triangleq
\frac{1}{M_i-1}
\sum_{m=1}^{M_i}
\big(
L_{i,m}^{\rm te}-\widehat{\mathcal R}_{i}^{\rm te}
\big)^2.
\label{eq_heldout_loss_variance}
\end{align}
}

\begin{proposition}
\label{prop_heldout_certificate}
For any $\delta_i\in(0,1)$, the Hoeffding certificate holds with probability at least $1-\delta_i$ over the independent held-out sample:
{
\begin{equation}
\label{eq_heldout_mse_certificate}
\mathcal E_{\mathcal D_i}(\bar{\mathrm F}_i)
\le
\mathcal U_i^{\rm H}(\delta_i)
\triangleq
\min\!\left\{4,
\widehat{\mathcal R}_{i}^{\rm te}
+
4\sqrt{\frac{\log(1/\delta_i)}{2M_i}}
\right\}.
\end{equation}
}
At the same confidence level, for $M_i\ge2$, the empirical Bernstein inequality separately gives the variance-sensitive certificate~\cite{MaurerPontil2009}
{
\begin{equation}
\label{eq_heldout_empirical_bernstein}
\begin{aligned}
\mathcal E_{\mathcal D_i}(\bar{\mathrm F}_i)
&\le
\mathcal U_i^{\rm EB}(\delta_i)\\
&\triangleq
\min\!\Bigg\{4,
\widehat{\mathcal R}_{i}^{\rm te}
+\sqrt{\frac{2\widehat{\mathcal V}_{i}^{\rm te}
\log(2/\delta_i)}{M_i}}\\[-0.3ex]
&\hspace{4.5em}
+\frac{28\log(2/\delta_i)}{3(M_i-1)}
\Bigg\}.
\end{aligned}
\end{equation}
}
The empirical Bernstein certificate can be smaller when the held-out loss variance is low, although neither certificate uniformly dominates the other. Consequently, either $\mathcal U_i^{\rm H}(\delta_i)$ or $\mathcal U_i^{\rm EB}(\delta_i)$ can replace $\mathcal E_{\mathcal D_i}$ in both~\eqref{eq_general_margin_bound_v2} and~\eqref{eq_combined_ber_certificate} without requiring exact posterior labels on the held-out set and without changing the hard decision.
\end{proposition}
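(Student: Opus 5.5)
The plan rests on one observation: the held-out losses use the bipolar labels $V_i=2u_i-1$ rather than the posterior targets $r_i^\star(\bm Z_i)$, and this is harmless because $r_i^\star$ is the conditional mean of $V_i$. Let
\begin{equation*}
\mathcal R_i(\bar{\mathrm F}_i)\triangleq\mathbb E\big[(\bar{\mathrm F}_i(\bm Z_i)-V_i)^2\big].
\end{equation*}
Expand $\bar{\mathrm F}_i-V_i=(\bar{\mathrm F}_i-r_i^\star)+(r_i^\star-V_i)$. The cross term has zero mean after conditioning on $\bm Z_i$: $\bar{\mathrm F}_i(\bm Z_i)-r_i^\star(\bm Z_i)$ is $\bm Z_i$-measurable and $\mathbb E[V_i-r_i^\star(\bm Z_i)\mid\bm Z_i]=0$ by~\eqref{eq_soft_target}. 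Hence
\begin{equation*}
\mathcal R_i(\bar{\mathrm F}_i)=\mathcal E_{\mathcal D_i}(\bar{\mathrm F}_i)+\mathbb E\big[1-r_i^\star(\bm Z_i)^2\big]\ge\mathcal E_{\mathcal D_i}(\bar{\mathrm F}_i).
\end{equation*}
It therefore suffices to upper-bound the label-based risk $\mathcal R_i(\bar{\mathrm F}_i)$, which is the mean of the observable losses $L_{i,m}^{\rm te}$. This is why no posterior labels are needed on the held-out set.

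Next, condition on the training data and the model-selection outcome, so that $\bar{\mathrm F}_i$ is a fixed function. Because the held-out sample is independent of both, the $L_{i,m}^{\rm te}$ are then i.i.d. with mean $\mathcal R_i(\bar{\mathrm F}_i)$. Since $\bar{\mathrm F}_i$ and $V_i$ lie in $[-1,1]$, each loss lies in $[0,4]$. The one-sided Hoeffding inequality for variables with range $4$ gives, with probability at least $1-\delta_i$,
\begin{equation*}
\mathcal R_i\le\widehat{\mathcal R}_i^{\rm te}+4\sqrt{\log(1/\delta_i)/(2M_i)}.
\end{equation*}
For the variance-sensitive certificate, apply the Maurer--Pontil empirical Bernstein bound~\cite{MaurerPontil2009} to the rescaled losses $L_{i,m}^{\rm te}/4\in[0,1]$. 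That bound reads: mean $\le$ sample mean $+\sqrt{2\widehat V\log(2/\delta)/n}+7\log(2/\delta)/(3(n-1))$. Multiplying by $4$, the sample variance scales by $16$, so the square-root term becomes exactly $\sqrt{2\widehat{\mathcal V}_i^{\rm te}\log(2/\delta_i)/M_i}$, and the last term becomes $28\log(2/\delta_i)/(3(M_i-1))$. The outer $\min\{4,\cdot\}$ is trivially valid because $\mathcal E_{\mathcal D_i}(\bar{\mathrm F}_i)\le 4$. Since both bounds hold conditionally with probability at least $1-\delta_i$ for every realization of the training data, integrating over that realization gives the unconditional statements. Combined with the inequality of the first paragraph, this yields~\eqref{eq_heldout_mse_certificate} and~\eqref{eq_heldout_empirical_bernstein}.

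Finally, for the substitution claim, note that clipping preserves the sign convention: $\bar{\mathrm F}_i\ge 0$ if and only if $\mathrm F_i\ge 0$, since $\operatorname{clip}(0,-1,1)=0$ and clipping is monotone. Hence $\bar{\mathrm F}_i$ produces exactly the same hard decisions and the same $P_{{\rm e},i}^{\rm NN}$. Propositions~\ref{prop_direct_excess_ber} and~\ref{prop_general_margin} hold for any predictor, so they apply to $\bar{\mathrm F}_i$ with $\mathcal E_{\mathcal D_i}(\bar{\mathrm F}_i)$. The right-hand sides of~\eqref{eq_general_margin_bound_v2} and~\eqref{eq_combined_ber_certificate} are nondecreasing in the MSE argument, so replacing it by $\mathcal U_i^{\rm H}$ or $\mathcal U_i^{\rm EB}$ preserves the bounds on the same high-probability event. (Clipping also never increases the MSE, because $r_i^\star\in[-1,1]$, although this fact is not needed.)

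The main obstacle I expect is the first step: making rigorous that the hard-label risk dominates the posterior MSE despite the conditioning on $\bm z_i$ built from previous bits, and ensuring the data-independence conditioning is stated carefully. The rest is the constant bookkeeping in rescaling the empirical Bernstein bound.
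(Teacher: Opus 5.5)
Your proposal is correct and follows essentially the same route as the paper: the conditional bias--variance decomposition shows that the signed-label risk dominates $\mathcal E_{\mathcal D_i}(\bar{\mathrm F}_i)$. Then Hoeffding is applied to the $[0,4]$-bounded losses, and Maurer--Pontil is applied to $L_{i,m}^{\rm te}/4$ and rescaled exactly as in Appendix~\ref{app_empirical_bernstein}, with sign preservation under clipping justifying the substitution. Your explicit conditioning on the training outcome and your monotonicity-in-MSE argument for the substitution into~\eqref{eq_general_margin_bound_v2} and~\eqref{eq_combined_ber_certificate} are somewhat more careful than the paper's write-up, but they do not change the argument.
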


\begin{proof}
{The conditional bias\text{-}variance decomposition gives}
\begin{equation}
\mathbb E\!\left[
(\bar{\mathrm F}_i(\bm Z_i)-V_i)^2
\mid \bm Z_i
\right]
=
(\bar{\mathrm F}_i(\bm Z_i)-r_i^\star(\bm Z_i))^2
+
\operatorname{Var}(V_i\mid\bm Z_i).
\end{equation}
Hence, the signed-label risk upper-bounds the population regression MSE. Hoeffding's inequality controls this risk by the empirical average of a loss bounded in $[0,4]$. The empirical Bernstein inequality additionally uses the observed held-out loss variance. Appendix~\ref{app_empirical_bernstein} gives the corresponding scaling argument for both the per-bit and aggregate certificates. Clipping preserves the sign of every nonzero output, so Proposition~\ref{prop_general_margin} applies to the same hard decision.
\end{proof}

For a simultaneous certificate over all $K$ message positions, choose per-bit failure probabilities satisfying $\sum_{i\in\mathcal A_K}\delta_i\le\delta$, for example $\delta_i=\delta/K$. A union bound then makes all per-bit held-out inequalities, and hence their BER aggregation, valid jointly with probability at least $1-\delta$.

{
For the following statement, let $\mathcal U_i^{\rm te}(\delta_i)$ denote either valid held-out upper bound in Proposition~\ref{prop_heldout_certificate}, and define
\begin{equation}
P_{\rm BER}^{\rm NN}
\triangleq
\frac{1}{K}
\sum_{i\in\mathcal A_K}P_{{\rm e},i}^{\rm NN}.
\label{eq_ber_def}
\end{equation}

\begin{corollary}[Engineering excess-BER condition]
\label{cor_engineering_ber_condition}
For a prescribed excess-BER tolerance $\varepsilon_{\rm BER}>0$, suppose that the per-bit held-out bounds hold simultaneously and that
\begin{equation}
\frac{1}{K}
\sum_{i\in\mathcal A_K}
\sqrt{\mathcal U_i^{\rm te}(\delta_i)}
\le
\varepsilon_{\rm BER}.
\label{eq_engineering_ber_condition}
\end{equation}
Then
\begin{equation}
P_{\rm BER}^{\rm NN}
\le
\frac{1}{K}
\sum_{i\in\mathcal A_K}P_{{\rm e},i}^{\rm MAP}
+\varepsilon_{\rm BER}.
\label{eq_engineering_ber_guarantee}
\end{equation}
In particular, $\mathcal U_i^{\rm te}(\delta_i)\le\varepsilon_{\rm BER}^2$ for every message position is a uniform sufficient condition.
\end{corollary}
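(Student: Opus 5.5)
The plan is to combine Proposition~\ref{prop_direct_excess_ber} with Proposition~\ref{prop_heldout_certificate} bit by bit, and then average over the $K$ message positions. The first point to settle is which predictor the BER statement refers to. The held-out certificate bounds $\mathcal E_{\mathcal D_i}(\bar{\mathrm F}_i)$ for the clipped predictor, not for $\mathrm F_i$ itself. Clipping to $[-1,1]$ preserves the sign of every output, and zero is mapped to zero. Hence $\mathbb I(\bar{\mathrm F}_i\ge 0)=\mathbb I(\mathrm F_i\ge0)$ pointwise, and the hard decision, and therefore $P_{{\rm e},i}^{\rm NN}$, is identical for $\mathrm F_i$ and $\bar{\mathrm F}_i$.

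Proposition~\ref{prop_direct_excess_ber} holds for any predictor, so I apply it to $\bar{\mathrm F}_i$. This gives $P_{{\rm e},i}^{\rm NN}-P_{{\rm e},i}^{\rm MAP}\le\sqrt{\mathcal E_{\mathcal D_i}(\bar{\mathrm F}_i)}$ for each $i\in\mathcal A_K$. On the event where all per-bit held-out inequalities hold simultaneously, which by the union bound with $\sum_i\delta_i\le\delta$ has probability at least $1-\delta$, we have $\mathcal E_{\mathcal D_i}(\bar{\mathrm F}_i)\le\mathcal U_i^{\rm te}(\delta_i)$ for every $i$. Since the square root is monotone, this yields $P_{{\rm e},i}^{\rm NN}\le P_{{\rm e},i}^{\rm MAP}+\sqrt{\mathcal U_i^{\rm te}(\delta_i)}$ for every $i$.

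Averaging these $K$ inequalities with weight $1/K$ and using the definition \eqref{eq_ber_def} gives
\begin{equation*}
P_{\rm BER}^{\rm NN}\le\frac1K\sum_{i\in\mathcal A_K}P_{{\rm e},i}^{\rm MAP}+\frac1K\sum_{i\in\mathcal A_K}\sqrt{\mathcal U_i^{\rm te}(\delta_i)}.
\end{equation*}
The hypothesis \eqref{eq_engineering_ber_condition} then gives \eqref{eq_engineering_ber_guarantee}. For the uniform sufficient condition, $\mathcal U_i^{\rm te}(\delta_i)\le\varepsilon_{\rm BER}^2$ gives $\sqrt{\mathcal U_i^{\rm te}(\delta_i)}\le\varepsilon_{\rm BER}$ for every $i$. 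The average is therefore at most $\varepsilon_{\rm BER}$, so \eqref{eq_engineering_ber_condition} holds.

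The only delicate step is the bookkeeping between $\mathrm F_i$ and $\bar{\mathrm F}_i$, together with the sign convention at zero. I would settle it explicitly by noting that the clip map is nondecreasing, fixes $0$, and sends positive values to positive values and negative values to negative values, so the thresholded decisions coincide. A second point is the probabilistic qualifier: the guarantee is deterministic conditional on the simultaneous-validity event. The statement already assumes that event, so no further probability computation is needed.
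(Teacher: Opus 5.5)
Your proposal is correct and follows the route the paper intends, since the corollary is stated without a separate proof: apply Proposition~\ref{prop_direct_excess_ber} to the sign-equivalent clipped predictor $\bar{\mathrm F}_i$, replace $\mathcal E_{\mathcal D_i}(\bar{\mathrm F}_i)$ by $\mathcal U_i^{\rm te}(\delta_i)$ on the simultaneous-validity event as Proposition~\ref{prop_heldout_certificate} permits, and average over the $K$ message positions. Your check that clipping fixes $0$ and preserves signs, so that the thresholded decisions and hence $P_{{\rm e},i}^{\rm NN}$ are unchanged, is the same observation the paper makes at the end of the proof of Proposition~\ref{prop_heldout_certificate}.
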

}

\subsection{{Low-Margin Evaluation and AWGN Specialization}}
\label{subsec_ga_alpha}

For the margin-sensitive bound, let $\mathcal U_i^{\rm te}(\delta_i)$ denote either the Hoeffding certificate in~\eqref{eq_heldout_mse_certificate} or the empirical Bernstein certificate in~\eqref{eq_heldout_empirical_bernstein}.
{The architecture-independent excess term associated with this certificate is}
\begin{equation}
\begin{aligned}
\Gamma_i^{\rm te}(\delta_i)
\triangleq \min\Bigg\{&
\sqrt{\mathcal U_i^{\rm te}(\delta_i)},\\[-0.3ex]
&\inf_{\rho\in(0,1)}
\left(
\alpha_i(\rho)
+\frac{\mathcal U_i^{\rm te}(\delta_i)}{\rho^2}
\right)
\Bigg\}.
\end{aligned}
\end{equation}
Thus, any trained predictor selected independently of the held-out sample satisfies
\begin{equation}
\label{eq_nn_heldout_bit_certificate}
P_{{\rm e},i}^{\rm NN}
\le
\min\!\left\{1,
P_{{\rm e},i}^{\rm MAP}+\Gamma_i^{\rm te}(\delta_i)
\right\}
\end{equation}
with probability at least $1-\delta_i$.

{The low-margin probability can be estimated directly from held-out soft targets as $\widehat\alpha_i(\rho)=M_i^{-1}\sum_{m=1}^{M_i}\mathbb I(|t_{i,m}|\le\rho)$. This empirical quantity is channel independent whenever the posterior targets are available. For AWGN, the relation $\gamma_i=\tanh(|L_i|/2)$ gives}
\begin{equation}
\label{eq_alpha_llr_equiv}
\alpha_i(\rho)
=
\mathbb P\!\left(
|L_i(\bm Z_i)|\le 2\,{\rm arctanh}(\rho)
\right).
\end{equation}

For BPSK over AWGN, GA models the synthesized-channel LLR analytically~\cite{ref17}. Under the all-zero-codeword symmetry convention and $\sigma_{\mathrm n}^2=N_0$, the single-use channel LLR and its mean are
\begin{equation}
\label{channel_llr}
L_1^{(1)}=\frac{2y}{\sigma_{\mathrm n}^2},
\qquad
\mu_1^{(1)}=\frac{2}{\sigma_{\mathrm n}^2}.
\end{equation}
The standard polar GA recursion produces the synthesized-channel means $\{\mu_N^{(i)}\}_{i=1}^{N}$, and the resulting LLR model is
\begin{equation}
\label{ga_assumption}
L_N^{(i)}
\sim
\mathcal N\bigl(\mu_N^{(i)},2\mu_N^{(i)}\bigr).
\end{equation}
This specialization is used only to evaluate the low-margin probability analytically. The direct square-root bound and the empirical-margin certificate remain independent of GA.

\begin{proposition}
\label{prop_alpha_ga}
For the $i$-th synthesized message channel under AWGN transmission, let $\tau_\rho\triangleq2\,{\rm arctanh}(\rho)$. Under the {GA}, the low-margin probability is
\begin{equation}
\label{eq_alpha_ga}
\alpha_i(\rho)
\approx
\mathrm Q\!\left(
\frac{\mu_N^{(i)}-\tau_\rho}{\sqrt{2\mu_N^{(i)}}}
\right)
-
\mathrm Q\!\left(
\frac{\mu_N^{(i)}+\tau_\rho}{\sqrt{2\mu_N^{(i)}}}
\right).
\end{equation}
\end{proposition}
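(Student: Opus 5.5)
The plan is to reduce the low-margin probability to a statement about the synthesized-channel LLR magnitude, and then evaluate that under the Gaussian model.

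\textbf{Step 1 (reduce to an LLR interval).} By~\eqref{eq_s_llr_relation} and~\eqref{eq_margin_def}, $\gamma_i=\tanh(|L_i|/2)$, and $\tanh$ is strictly increasing on $[0,\infty)$. Hence, as already recorded in~\eqref{eq_alpha_llr_equiv},
\begin{equation}
\alpha_i(\rho)=\mathbb P\big(-\tau_\rho\le L_i(\bm Z_i)\le\tau_\rho\big),
\qquad \tau_\rho=2\,{\rm arctanh}(\rho).
\end{equation}
Here $\tau_\rho$ is finite and positive because $\rho\in(0,1)$.

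\textbf{Step 2 (fix the conditioning convention).} Under the all-zero-codeword convention and channel symmetry for BPSK over AWGN, the distribution of the synthesized-channel LLR does not depend on the transmitted message. Consequently $\alpha_i(\rho)$ can be computed conditioned on $\bm u=\bm 0$. In this case $L_i$ is identified with the SC synthesized-channel LLR $L_N^{(i)}$. The identification holds because SC evaluates the same conditional posterior given $\bm\ell$ and the correct previous bits, which is the posterior-reference mode.

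Symmetry is needed in this step for the following reason. In general the magnitude $|L_i|$ is invariant under the sign flip induced by $u_i=1$. For symmetric channels, the law of $|L_i|$ is therefore the same as under the all-zero codeword. This makes the event $\{|L_i|\le\tau_\rho\}$ have the same probability, so the reduction to $\bm u=\bm 0$ loses nothing.

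\textbf{Step 3 (apply the GA model).} Invoke the GA model~\eqref{ga_assumption}, $L_N^{(i)}\sim\mathcal N(\mu,2\mu)$ with $\mu=\mu_N^{(i)}$. Standardizing the interval endpoints gives
\begin{align}
\mathbb P(-\tau_\rho\le L\le\tau_\rho)
&=\Phi\!\Big(\tfrac{\tau_\rho-\mu}{\sqrt{2\mu}}\Big)-\Phi\!\Big(\tfrac{-\tau_\rho-\mu}{\sqrt{2\mu}}\Big).
\end{align}
Using $\Phi(x)=1-\mathrm Q(x)$ and $\mathrm Q(-x)=1-\mathrm Q(x)$, this becomes
\begin{equation}
\mathrm Q\!\Big(\tfrac{\mu-\tau_\rho}{\sqrt{2\mu}}\Big)-\mathrm Q\!\Big(\tfrac{\mu+\tau_\rho}{\sqrt{2\mu}}\Big),
\end{equation}
which is~\eqref{eq_alpha_ga}. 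The ``$\approx$'' in the statement reflects only the GA itself. Every other step is exact.

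\textbf{Main obstacle.} The only delicate point is Step 2: justifying that the posterior LLR $L_i(\bm Z_i)$, defined via the input $\bm z_i=[\bm\ell,\bm u^{i-1}]$, coincides in distribution with the GA-modeled $L_N^{(i)}$. I would handle this with the standard symmetric-channel argument that maps a general codeword to the all-zero one via sign flips of $\bm\ell$. I would then note that the absolute value removes any dependence on $u_i$.
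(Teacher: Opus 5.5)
Your proposal is correct and follows the paper's route: it reduces $\alpha_i(\rho)$ to $\mathbb P(|L_N^{(i)}|\le\tau_\rho)$ via~\eqref{eq_alpha_llr_equiv}, then evaluates that interval probability under $\mathcal N(\mu_N^{(i)},2\mu_N^{(i)})$ and converts $\Phi$ to $\mathrm Q$. Your Step~2 symmetry argument, that the law of $|L|$ is unchanged by the sign flip when $u_i=1$, makes explicit what the paper simply adopts as its all-zero-codeword convention; one caveat to ``every other step is exact'' is that identifying $L_i$ with $L_N^{(i)}$ is exact only under the synthesized-channel model, where future bits are uniform rather than frozen, and this is the same identification the paper itself makes.
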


\begin{proof}
Under the GA model, $L_N^{(i)}$ is Gaussian with mean $\mu_N^{(i)}$ and variance $2\mu_N^{(i)}$. Equation~\eqref{eq_alpha_ga} follows by evaluating the probability of the interval $[-\tau_\rho,\tau_\rho]$.
\end{proof}

Let $\alpha_i^{\rm GA}(\rho)$ denote the right-hand side of~\eqref{eq_alpha_ga}. The corresponding AWGN approximation-based certificate is
\begin{equation}
\label{eq_bit_error_bound_ga}
\begin{aligned}
b_i^{\rm GA}(\rho,\delta_i)
\triangleq \min\Bigg\{1,\;P_{{\rm e},i}^{\rm MAP}
&+\min\Bigg[
\sqrt{\mathcal U_i^{\rm te}(\delta_i)},\\[-0.3ex]
&\alpha_i^{\rm GA}(\rho)
+\frac{\mathcal U_i^{\rm te}(\delta_i)}{\rho^2}
\Bigg]\Bigg\}.
\end{aligned}
\end{equation}
{The GA expression provides an AWGN-specific interpretation of the low-margin term. The empirical margin and the direct square-root term remain the quantities used for distribution-free certification.}

\subsection{{Overall BER and BLER Characterization}}
\label{subsec_overall_bler}

Applying the combined architecture-independent certificate to the BER definition in~\eqref{eq_ber_def} with a possibly bit-dependent $\rho_i$ gives
\begin{equation}
\label{eq_ber_bound_general_v2}
\begin{aligned}
P_{\rm BER}^{\rm NN}
\le
\frac{1}{K}
\sum_{i\in\mathcal A_K}
\min\Bigg\{1,\;P_{{\rm e},i}^{\rm MAP}
&+\min\Bigg[
\sqrt{\mathcal E_{\mathcal D_i}},\\[-0.3ex]
&\alpha_i(\rho_i)
+\frac{\mathcal E_{\mathcal D_i}}{\rho_i^2}
\Bigg]\Bigg\}.
\end{aligned}
\end{equation}
The held-out upper bound in~\eqref{eq_heldout_mse_certificate} can replace every $\mathcal E_{\mathcal D_i}$ in~\eqref{eq_ber_bound_general_v2}.

A block error implies at least one message-bit error. Therefore,
\begin{equation}
\label{eq_bler_union}
P_{\rm BLER}^{\rm NN}
\le
\sum_{i\in\mathcal A_K}
P_{{\rm e},i}^{\rm NN}.
\end{equation}
For a sequential decoder, let $i_1<\cdots<i_K$ be the ordered message positions and define
{
\begin{equation}
q_r\triangleq
\mathbb P\!\left(
\hat u_{i_r}\neq u_{i_r}
\,\middle|\,
\hat u_{i_s}=u_{i_s},~s<r
\right).
\end{equation}
The disjoint first-error events give
\begin{equation}
\label{eq_bler_chain}
P_{\rm BLER}^{\rm seq}
=1-\prod_{r=1}^{K}\big(1-q_r\big)
\le \sum_{r=1}^{K}q_r.
\end{equation}
If the bitwise reliability analysis gives $q_r\le b_r$, then
\begin{equation}
\label{eq_bler_chain_bound_v2}
P_{\rm BLER}^{\rm seq}
\le
1-\prod_{r=1}^{K}\big(1-\min\{1,b_r\}\big).
\end{equation}
}
{Equations~\eqref{eq_ber_bound_general_v2} and~\eqref{eq_bler_chain_bound_v2} complete the conversion from certified bitwise regression accuracy to communication-level BER and BLER requirements.}

\subsection{SC-Referenced Reliability Guarantee}
\label{sec_certification_logic}

When $\bm z_i$ contains the channel evidence and transmitted previous bits of the synthesized bit-channel definition, the SC decision at message position $i$ is the conditional bitwise MAP rule $\mathbb I(r_i^\star(\bm z_i)\ge 0)$. Propositions~\ref{prop_general_margin} and~\ref{prop_direct_excess_ber} therefore convert population MSE into an explicit BER gap relative to the SC reference distribution. Proposition~\ref{prop_heldout_certificate} makes the required population condition verifiable from independent held-out data.

{The direct and posterior-margin forms serve complementary purposes. The direct square-root term is margin independent. The posterior-margin form separates low-confidence channel outputs, quantified by $\alpha_i(\rho)$, from predictor mismatch away from the decision boundary, quantified by $\mathcal E_{\mathcal D_i}/\rho^2$.}

Corollary~\ref{cor_engineering_ber_condition} states the central engineering condition directly. Section~\ref{sec_onn_instance} then supplies a constructive three-layer ONN decoder whose empirical MSE is provably convergent, while independent held-out data determine whether the required population-MSE level is attained.

\section{A Provably Trainable Three-Layer ONN Decoder Instance}
\label{sec_onn_instance}

This section instantiates the empirical-convergence component after the architecture-independent reliability requirement has been established. The selected construction is a three-layer fully connected ReLU ONN decoder with fixed output signs. Under the stated width and step-size conditions, full-batch gradient descent on fixed data contracts its empirical MSE. Independent held-out data then determine whether the trained predictor reaches the population-MSE level required by Section~\ref{sec_theory}.

\subsection{Three-Layer Decoder Model and Fixed-Data Objective}

Throughout this paper, layers 1, 2, and 3 denote the input, hidden, and output layers, respectively. Layer 2 contains one over-parameterized ReLU hidden representation, while layer 3 is a fixed-sign linear output. For message-bit position $i$, the predictor is
\begin{equation}
\label{eq_three_layer_model}
\mathrm F_i(\mathbf{W}_i,\bm a_i,\bm z_i)
=
\frac{1}{\sqrt B}
\sum_{j=1}^{B}
a_{i,j}\,
\sigma\!\left(\bm w_{i,j}^\top \bm z_i\right),
\end{equation}
where $B$ is the hidden-layer width, $\mathbf{W}_i=[\bm w_{i,1},\ldots,\bm w_{i,B}]$ contains the input-to-hidden weights from layer 1 to layer 2, $\bm a_i=[a_{i,1},\ldots,a_{i,B}]^\top$ contains the hidden-to-output coefficients from layer 2 to layer 3, and $\sigma(x)=\max\{x,0\}$ denotes the ReLU activation function. The hidden-to-output coefficients are randomly initialized and fixed, while the input-to-hidden weights are optimized by gradient descent following~\cite{du2018gradient}.

Let $\mathcal{S}_i=\{(\bm z_{i,\ell},t_{i,\ell})\}_{\ell=1}^{D}$ denote the fixed training set of the $i$-th bit-channel, where $D$ is the training-set size and $t_{i,\ell}=r_i^\star(\bm z_{i,\ell})$ for every $\ell\in[D]$. The ONN decoder minimizes the empirical squared loss
\begin{equation}
\label{eq_empirical_loss_i}
\mathcal L_i(\mathbf{W}_i,\bm a_i)
=
\frac{1}{2D}
\sum_{\ell=1}^{D}
\left(
\mathrm F_i(\mathbf{W}_i,\bm a_i,\bm z_{i,\ell})
-
t_{i,\ell}
\right)^2.
\end{equation}
The target in~\eqref{eq_soft_target} is the Bayes regressor under the MSE criterion. Define the prediction and target vectors as
\begin{align}
\label{eq_prediction_vector_i}
\tilde{\bm u}_i(k)
&\triangleq
\big[
\mathrm F_i(\mathbf{W}_i(k),\bm a_i,\bm z_{i,1}),
\dots,
\mathrm F_i(\mathbf{W}_i(k),\bm a_i,\bm z_{i,D})
\big]^\top,\\
\label{eq_target_vector_i}
\bm t_i
&\triangleq
[t_{i,1},\dots,t_{i,D}]^\top.
\end{align}
The corresponding empirical MSE is
\begin{equation}
\label{eq_empirical_mse_i}
\widehat{\mathcal E}_{\mathcal{S}_i}(k)
=
\frac{1}{D}
\|\tilde{\bm u}_i(k)-\bm t_i\|_2^2.
\end{equation}

\subsection{Empirical Training Error of the Three-Layer ONN Decoder}
\label{subsec_training_error}

We first characterize the empirical training behavior of the $i$-th three-layer ONN decoder. To this end, define the limiting Gram matrix associated with the $i$-th bit-channel training inputs as
\begin{equation}
\begin{split}
\label{eq_H_infty_i}
\big[\mathbf H_i^\infty\big]_{\ell,\ell'}
\!=&~
\mathbb E_{\bm w\sim\mathcal N(\bm 0,\mathbf I)}
[
\bm z_{i,\ell}^{\top}\bm z_{i,\ell'}
\,\\&\times\mathbb I(\bm w^\top \bm z_{i,\ell}\ge 0,\,
\bm w^\top \bm z_{i,\ell'}\ge 0)
],
\end{split}
\end{equation}
and let
\begin{equation}
\label{eq_lambda0_i}
\lambda_{0,i}\triangleq \lambda_{\min}\!\big(\mathbf H_i^\infty\big),
\end{equation}
where $\lambda_{\min}(\cdot)$ denotes the smallest eigenvalue of a symmetric matrix. Recall that $D$ is the number of training samples in $\mathcal S_i$. The bit-channel input geometry ensures the positivity of $\lambda_{0,i}$ under AWGN sampling, as formalized next.

\begin{lemma}
\label{lem_nonparallel_zi}
Under the AWGN channel, for any message-bit index $i\in\mathcal A_K$, the bit-channel training inputs are almost surely pairwise non-parallel. Specifically, for any two independently generated samples $\bm z_{i,\ell}$ and $\bm z_{i,\ell'}$ with $\ell\neq \ell'$,
\begin{equation}
\label{eq_nonparallel_prob}
\mathbb P\big(
\bm z_{i,\ell}\parallel \bm z_{i,\ell'}
\big)=0.
\end{equation}
Consequently, for any finite training set $\mathcal{S}_i$, with probability one no two distinct training inputs are parallel.
\end{lemma}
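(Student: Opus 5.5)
The plan is to reduce parallelism to a statement about the continuous LLR part of the inputs. Recall from~\eqref{eq_bitwise_input} that
\begin{equation*}
\bm z_{i,\ell}=[\bm\ell_\ell,\hat{\bm u}^{i-1}_\ell],
\end{equation*}
where, under AWGN, $\bm\ell_\ell=2(\bm s_\ell+\bm n_\ell)/\sigma_{\mathrm n}^2$. The BPSK vector $\bm s_\ell$ and the prefix $\hat{\bm u}^{i-1}_\ell$ take only finitely many values, and the noise $\bm n_\ell\sim\mathcal N(\bm 0,\sigma_{\mathrm n}^2\mathbf I_N)$ is independent of them. This holds in both modes: in the posterior-reference mode the prefix is a function of $\bm u$ alone, and in the sequential mode it may depend on the noise, but the projection argument below does not need independence. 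If $\bm z_{i,\ell}\parallel\bm z_{i,\ell'}$, then in particular the first $N$ coordinates are parallel, i.e., $\bm\ell_\ell\parallel\bm\ell_{\ell'}$. It therefore suffices to show
\begin{equation*}
\mathbb P(\bm\ell_\ell\parallel\bm\ell_{\ell'})=0,
\end{equation*}
provided that the zero-vector case is also covered, which is itself a null event.

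The key step conditions on $\bm\ell_{\ell'}$, which is independent of $\bm n_\ell$ because the samples are generated independently. Given $\bm\ell_{\ell'}=\bm v$, the event that $\bm\ell_\ell$ is parallel to $\bm v$ is contained in the set where $\bm\ell_\ell$ lies in the one-dimensional subspace $\operatorname{span}\{\bm v\}$ when $\bm v\neq\bm 0$, or in $\{\bm 0\}$ when $\bm v=\bm 0$. Conditional on $\bm s_\ell$, $\bm\ell_\ell$ is Gaussian with nondegenerate covariance $(4/\sigma_{\mathrm n}^2)\mathbf I_N$. Such a law is absolutely continuous with respect to Lebesgue measure on $\mathbb R^N$, so for $N\ge2$ it assigns probability zero to any line through the origin. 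Averaging over the finitely many values of $\bm s_\ell$, and then over $\bm v$ by Fubini, gives probability zero. The same absolute-continuity argument shows $\mathbb P(\bm\ell_\ell=\bm 0)=0$, so the degenerate case where one input has a vanishing LLR block is also null.

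Some care is needed over the definition of parallelism. I would take $\bm a\parallel\bm b$ to mean $\bm a=c\bm b$ for some scalar $c$, possibly negative, and I would include the zero vector in the event so that the statement is as strong as needed for the positive-definiteness of $\mathbf H_i^\infty$ in~\eqref{eq_lambda0_i}. The containment above covers every such case. For the final claim, a finite training set has $\binom{D}{2}$ pairs. A countable union of null events is null, so with probability one no two distinct inputs are parallel.

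The main obstacle is justifying that the LLR block carries a genuinely continuous, full-dimensional distribution independent across samples, even though the appended prefix bits may be correlated with the noise in sequential mode. The fix is to rely on parallelism of the first $N$ coordinates only, which discards the prefix entirely. I also need $N\ge2$, which holds since $N=2^n$ with $n\ge1$. For $N=1$, lines through the origin would have positive measure and the argument would break.
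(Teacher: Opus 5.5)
Your proposal is correct and follows essentially the paper's argument: both reduce parallelism to the channel block of one sample lying on a fixed line through the origin in $\mathbb R^N$, which is a null set under the continuous Gaussian noise density, and then average and take a union bound over the finitely many sample pairs. The one difference is that you drop the prefix coordinates by projection, whereas the paper enumerates the prefix branches; this is a harmless simplification that also cleanly handles the noise-dependent prefixes of sequential mode, and your explicit treatment of $N\ge 2$ and of the zero-vector case makes precise what the paper leaves implicit.
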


\begin{proof}
Under the AWGN channel, the received vector satisfies $\bm y=\bm s+\bm n$, where $\bm n$ has a continuous density on $\mathbb R^N$. Fix the transmitted codewords and the two finite-valued prefix vectors associated with samples $\ell$ and $\ell'$. On this fixed pair of prefix branches, $\bm z_{i,\ell}\parallel\bm z_{i,\ell'}$ requires the existence of a scalar $c$ such that both the continuous channel components and the fixed prefix components satisfy
\begin{equation}
\bm y_{\ell'}=c\bm y_{\ell},
\qquad
\hat{\bm u}_{\ell'}^{i-1}=c\hat{\bm u}_{\ell}^{i-1}.
\end{equation}
For every fixed $\bm z_{i,\ell}$ and fixed pair of prefix branches, the admissible $\bm y_{\ell'}$ therefore lies either in the empty set or in a one-dimensional linear set in $\mathbb R^N$, which has Lebesgue measure zero. Since $\bm y_{\ell'}$ has a continuous density, the conditional probability of this event is zero. Averaging over $\bm z_{i,\ell}$, summing over the finitely many codeword and prefix branches, and applying a union bound over the finitely many sample pairs proves the claim.
\end{proof}

{By Lemma~\ref{lem_nonparallel_zi}, the non-parallel condition required in \cite[Theorem~3.1]{du2018gradient} holds almost surely when the fixed training set is sampled independently from the AWGN distribution. The limiting Gram matrix in \eqref{eq_H_infty_i} is therefore almost surely strictly positive definite under this sampling model. The corresponding empirical convergence result for the $i$-th three-layer ONN decoder is stated next.}

\begin{theorem}
\label{thm_training_convergence_i}
Consider the $i$-th bit-channel three-layer ONN decoder in \eqref{eq_three_layer_model} trained on a fixed set $\mathcal{S}_i$ by gradient descent on \eqref{eq_empirical_loss_i}, where the hidden-to-output coefficients are initialized with independent and identically distributed (i.i.d.) samples from ${\rm Unif}(\{-1,1\})$ and kept fixed, and the input-to-hidden weights are initialized with i.i.d. samples from $\mathcal N(\bm 0,\mathbf I)$. Suppose that
\begin{equation}
\max_{\ell\in[D]}\|\bm z_{i,\ell}\|_2\le C_z,
\end{equation}
the targets satisfy $|t_{i,\ell}|\le 1$ for all $\ell\in[D]$, and $\lambda_{0,i}>0$. Then, for any $\delta_i\in(0,1)$, there exist positive constants $C_B(C_z)$ and $C_\eta(C_z)$, independent of $B$, $D$, and the bit index $i$, such that if
\begin{equation}
\label{eq_width_condition_i}
B
\ge
C_B(C_z)\,
\frac{D^6}{\lambda_{0,i}^4\,\delta_i^3},
\end{equation}
    and the step size satisfies
\begin{equation}
\label{eq_stepsize_condition_i}
0<\eta_i\le C_\eta(C_z)\frac{\lambda_{0,i}}{D},
\end{equation}
then with probability at least $1-\delta_i$ over the random initialization,
\begin{equation}
\label{eq_empirical_linear_rate_i}
\widehat{\mathcal E}_{\mathcal{S}_i}(k)
\le
\left(1-\frac{\eta_i\lambda_{0,i}}{2D}\right)^k
\widehat{\mathcal E}_{\mathcal{S}_i}(0),
\qquad k=0,1,2,\ldots,
\end{equation}
where $\widehat{\mathcal E}_{\mathcal{S}_i}(k)$ is defined in \eqref{eq_empirical_mse_i}. Moreover, for every hidden neuron $j\in[B]$,
\begin{equation}
\label{eq_weight_stay_close_i}
\big\|\bm w_{i,j}(k)-\bm w_{i,j}(0)\big\|_2
\le
\frac{4C_z\sqrt D\,\|\bm t_i-\tilde{\bm u}_i(0)\|_2}{\sqrt B\,\lambda_{0,i}}.
\end{equation}
\end{theorem}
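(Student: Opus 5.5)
The proof will adapt the Du et al.\ gradient-descent analysis~\cite{du2018gradient} to the $i$-th bit-channel setting. The differences from that source are that inputs are bounded by $C_z$ rather than normalized, and that targets are bounded by $1$. Throughout, write $\tilde{\bm u}_i(k)$ for the prediction vector, $\bm e_i(k)=\bm t_i-\tilde{\bm u}_i(k)$, and
\[
\mathbf H_i(k)_{\ell\ell'}=\tfrac1B\sum_j \bm z_{i,\ell}^\top\bm z_{i,\ell'}\,\mathbb I_{j\ell}(k)\mathbb I_{j\ell'}(k)
\]
for the empirical Gram matrix, where $\mathbb I_{j\ell}(k)=\mathbb I(\bm w_{i,j}(k)^\top\bm z_{i,\ell}\ge0)$. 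The factor $1/(2D)$ in \eqref{eq_empirical_loss_i} only rescales the step size. The gradient step on $\bm w_{i,j}$ is $-\frac{\eta_i}{D\sqrt B}a_{i,j}\sum_\ell (\tilde u_\ell-t_\ell)\mathbb I_{j\ell}\bm z_{i,\ell}$, so the effective learning rate is $\eta_i/D$. This explains both the $\lambda_{0,i}/D$ in \eqref{eq_stepsize_condition_i} and the rate $1-\eta_i\lambda_{0,i}/(2D)$ in \eqref{eq_empirical_linear_rate_i}.

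\textbf{Step 1: initialization.} First, $\|\mathbf H_i(0)-\mathbf H_i^\infty\|_2\le\lambda_{0,i}/4$ with probability $1-\delta_i/3$. This follows from entrywise Hoeffding bounds, since each summand is bounded by $C_z^2$, combined with a union bound over the $D^2$ entries, and requires $B\gtrsim C_z^4D^2\log(D/\delta_i)/\lambda_{0,i}^2$. Second, $\mathbb E\|\bm e_i(0)\|_2^2\le D(1+C(C_z))$, because $a_{i,j}$ are Rademacher and independent of $\bm w_{i,j}$, so $\mathrm F_i(\cdot,\bm z)$ has variance at most $C_z^2$ and $|t|\le1$. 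Markov's inequality then gives $\|\bm e_i(0)\|_2^2\le C D/\delta_i$ with probability $1-\delta_i/3$.

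\textbf{Step 2: Gram stability in a neighbourhood.} Let $R$ bound $\|\bm w_{i,j}-\bm w_{i,j}(0)\|_2$ uniformly in $j$. An activation pattern can flip only if $|\bm w_{i,j}(0)^\top\bm z_{i,\ell}|\le C_zR$. This event has probability at most $\frac{2C_zR}{\sqrt{2\pi}\|\bm z_{i,\ell}\|}$, so we need a lower bound on $\|\bm z_{i,\ell}\|$. I would obtain it from $\lambda_{0,i}>0$: since $[\mathbf H_i^\infty]_{\ell\ell}=\|\bm z_{i,\ell}\|^2/2\ge\lambda_{0,i}$, we get $\|\bm z_{i,\ell}\|\ge\sqrt{2\lambda_{0,i}}$. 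Markov's inequality on the number of flips then yields $\|\mathbf H_i-\mathbf H_i(0)\|_2\le C C_z^2 D R/(\delta_i\sqrt{\lambda_{0,i}})$, up to the same kind of constants, with probability $1-\delta_i/3$. Requiring this to be at most $\lambda_{0,i}/4$ defines the critical radius $R^\ast\asymp\delta_i\lambda_{0,i}^{3/2}/(C_z^2D)$. On this neighbourhood, $\lambda_{\min}(\mathbf H_i(k))\ge\lambda_{0,i}/2$.

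\textbf{Step 3: induction on $k$.} The hypothesis is that \eqref{eq_empirical_linear_rate_i} holds up to step $k$ and every $\bm w_{i,j}(s)$ with $s\le k$ lies within $R'$ of its initialization, where $R'$ is the right-hand side of \eqref{eq_weight_stay_close_i}. The per-step movement is at most $\frac{\eta_i C_z}{D\sqrt B}\sqrt D\|\bm e_i(s)\|_2$. Summing the geometric series with ratio $\sqrt{1-\eta_i\lambda_{0,i}/(2D)}$ gives
\[
\frac{4C_z\sqrt D\|\bm e_i(0)\|}{\sqrt B\lambda_{0,i}},
\]
which is exactly \eqref{eq_weight_stay_close_i}. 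For the contraction itself, I will follow Du et al.\ and split the neurons into those whose pattern does not flip (set $S_\ell$) and the flipped ones ($S_\ell^\perp$). Write
\[
\bm e(k+1)=\bm e(k)-\tfrac{\eta_i}{D}\mathbf H_i(k)\bm e(k)+\bm I_1+\bm I_2,
\]
where $\bm I_1$ collects the flipped-neuron terms and $\bm I_2$ is the second-order remainder. The flipped-neuron term satisfies $\|\bm I_1\|\lesssim \frac{\eta_i}{D}C_z^2 D R\|\bm e\|/(\delta_i\sqrt{\lambda_{0,i}})$, and the remainder satisfies $\|\bm I_2\|\lesssim(\eta_i/D)^2C_z^4D^2\|\bm e\|$. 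Both are absorbed into the decay once $R'\le R^\ast$ and $\eta_i\le C_\eta\lambda_{0,i}/D$, giving $\|\bm e(k+1)\|^2\le(1-\eta_i\lambda_{0,i}/(2D))\|\bm e(k)\|^2$.

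The \textbf{main obstacle} is bookkeeping the exponent so that the condition $R'\le R^\ast$ reproduces \eqref{eq_width_condition_i}. Substituting $\|\bm e(0)\|\lesssim\sqrt{D/\delta_i}$ gives the requirement
\[
\frac{C_zD}{\sqrt{B\delta_i}\lambda_{0,i}}\lesssim\frac{\delta_i\lambda_{0,i}^{3/2}}{C_z^2D},
\]
that is, $B\gtrsim C_z^6D^4/(\delta_i^3\lambda_{0,i}^5)$. To reach the stated $D^6/(\lambda_{0,i}^4\delta_i^3)$ instead, I would use the trivial bound $\|\bm z\|\le C_z$ together with $\lambda_{0,i}\le C_z^2$ to trade one power of $\lambda_{0,i}$ for constants. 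I would also allow Du's cruder counting via $\|\cdot\|_F$, which produces the extra factor $D^2$; adjusting the flip-probability estimate accordingly should produce the claimed form with $C_B(C_z)$ absorbing all $C_z$ dependence. All three high-probability events are combined by a union bound at total level $\delta_i$.
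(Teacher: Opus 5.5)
Your skeleton matches the Du et al.\ route the paper follows: Hoeffding for $\mathbf H_i(0)$, Markov for $\|\bm e_i(0)\|_2^2$, Gram stability in a per-neuron ball, and induction with the geometric-series displacement bound. Your derivation of \eqref{eq_weight_stay_close_i} is correct. The gap is at the step you flag as the main obstacle, and your proposed fix does not work. In Step 2 you bound a possible flip by $|\bm w_{i,j}(0)^\top\bm z_{i,\ell}|\le C_zR$, which forces you to use $\|\bm z_{i,\ell}\|_2\ge\sqrt{2\lambda_{0,i}}$. That costs a factor $\lambda_{0,i}^{1/2}$ in $R^\ast$ and leaves the requirement $B\gtrsim D^4/(\lambda_{0,i}^5\delta_i^3)$.

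The inequality $\lambda_{0,i}\le C_z^2$ cannot remove the extra $1/\lambda_{0,i}$. It bounds $1/\lambda_{0,i}$ from below, so it only shows that your requirement is the more demanding one. Moreover, $\lambda_{0,i}$ is not bounded below in terms of $D$ and $C_z$; nearly parallel inputs make it arbitrarily small. Hence no constant $C_B(C_z)$ makes $B\ge C_BD^6/(\lambda_{0,i}^4\delta_i^3)$ imply your condition. Switching to Du's cruder counting changes only the power of $D$, not the power of $\lambda_{0,i}$.

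The missing observation is that the flip probability is scale invariant. If $\|\bm w_{i,j}-\bm w_{i,j}(0)\|_2\le R$, a flip on sample $\ell$ requires $|\bm w_{i,j}(0)^\top\bm z_{i,\ell}|\le R\|\bm z_{i,\ell}\|_2$, not merely $\le C_zR$. Since $\bm w_{i,j}(0)^\top\bm z_{i,\ell}/\|\bm z_{i,\ell}\|_2\sim\mathcal N(0,1)$, this event has probability at most $2R/\sqrt{2\pi}$, with no lower bound on $\|\bm z_{i,\ell}\|_2$ needed.

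Feed this into Du's entrywise count. Each entry of $\mathbf H_i-\mathbf H_i(0)$ is at most $C_z^2$ times the fraction of neurons flipping on $\ell$ or $\ell'$; summing over the $D^2$ entries and applying Markov gives the radius $R_i=c(C_z)\delta_i\lambda_{0,i}/D^2$ of Lemma~\ref{lem_app_H_stable}. The displacement bound is of order $C_zD/(\sqrt{B\delta_i}\,\lambda_{0,i})$, and requiring it to stay below $R_i$ is exactly $B\gtrsim D^6/(\lambda_{0,i}^4\delta_i^3)$. This is how the paper reaches \eqref{eq_app_m_after_init}.

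Separately, your single factor $D$ in Step 2 does not follow from first-moment Markov on the flip count, which gives $D^2R/\delta_i$. A sharper count would need concentration of the per-sample flip numbers. It is unnecessary, because the stated theorem already allows $D^6$.
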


\begin{proof}
The detailed proof is provided in Appendix~\ref{app_proof_training}.
\end{proof}

Theorem~\ref{thm_training_convergence_i} establishes linear convergence of the empirical MSE for the $i$-th three-layer ONN decoder when the hidden layer satisfies the stated sufficient-width condition. The required width depends polynomially on the bit-channel training size $D$, the spectral quantity $\lambda_{0,i}$, and the failure probability $\delta_i$. This result supplies the constructive optimization component of the reliability-certification framework.

To express the convergence rate on a communication-oriented scale, define the empirical MSE in the decibel (dB) domain as
\begin{equation}
\label{eq_empirical_mse_db_def}
\left.
\widehat{\mathcal E}_{\mathcal S_i}(k)
\right|_{\rm dB}
\triangleq
10\log_{10}
\widehat{\mathcal E}_{\mathcal S_i}(k).
\end{equation}
Then \eqref{eq_empirical_linear_rate_i} implies
\begin{equation}
\label{eq_empirical_linear_rate_db}
\left.
\widehat{\mathcal E}_{\mathcal S_i}(k)
\right|_{\rm dB}
\le
\left.
\widehat{\mathcal E}_{\mathcal S_i}(0)
\right|_{\rm dB}
-
k\,G_i^{\rm train},
\end{equation}
where
\begin{equation}
\label{eq_training_gain_db}
G_i^{\rm train}
\triangleq
-10\log_{10}
\left(
1-\frac{\eta_i\lambda_{0,i}}{2D}
\right)
\end{equation}
is the guaranteed training gain in dB per gradient-descent iteration.

The quantity $G_i^{\rm train}$ in \eqref{eq_empirical_linear_rate_db} is the guaranteed empirical-MSE reduction per gradient-descent iteration on the $i$-th synthesized message channel. Equation~\eqref{eq_weight_stay_close_i} further bounds the maximum displacement of an individual hidden-neuron weight vector by $\mathcal O(B^{-1/2})$. The corresponding full-matrix Frobenius displacement remains constant with respect to the width $B$. These optimization properties identify a trainable construction, while Proposition~\ref{prop_heldout_certificate} independently determines whether its trained predictor satisfies the population-MSE condition required by the BER guarantee. The admissible step size is proportional to $\lambda_{0,i}/D$, with the proportionality factor determined by $C_\eta(C_z)$.

\subsection{Row-Wise Locality of the Three-Layer ONN Decoder}
\label{subsec_population_mse}

Theorem~\ref{thm_training_convergence_i} supplies the fixed-sample optimization component, while Proposition~\ref{prop_heldout_certificate} supplies population control on unseen channel outputs. The locality conclusion of the training theorem is expressed in the row-wise matrix norm defined below. Define
\begin{equation}
\label{eq_tau_i_def}
\tau_i
\triangleq
\frac{4C_z\sqrt D\,\|\bm t_i-\tilde{\bm u}_i(0)\|_2}{\sqrt B\,\lambda_{0,i}}.
\end{equation}
Then Theorem~\ref{thm_training_convergence_i} gives
\begin{equation}
\label{eq_rowwise_locality_i}
\|\mathbf W_i(k)-\mathbf W_i(0)\|_{2,\infty}
\le \tau_i
=\mathcal O(B^{-1/2}).
\end{equation}
The corresponding Frobenius consequence is
\begin{equation}
\label{eq_frobenius_from_rowwise}
\|\mathbf W_i(k)-\mathbf W_i(0)\|_F
\le
\sqrt B\,\tau_i,
\end{equation}
{which remains constant with respect to the width $B$. Consequently, a per-neuron Euclidean displacement does not define an equal-radius Frobenius constraint.}

Row-wise locality characterizes the ONN decoder's optimization trajectory, while Proposition~\ref{prop_heldout_certificate} independently certifies population MSE from held-out data.
\subsection{{SC-Referenced Reliability of the Three-Layer ONN Decoder}}
\label{subsec_bit_error_bound}

\begin{corollary}
\label{thm_bit_error_bound}
{For any trained three-layer ONN decoder selected independently of the held-out sample, with probability at least $1-\delta_i$,}
{
\begin{equation}
\label{eq_bit_error_bound_general}
P_{{\rm e},i}^{\rm ONN}
\le
\min\!\left\{1,
P_{{\rm e},i}^{\rm MAP}+\Gamma_i^{\rm te}(\delta_i)
\right\}.
\end{equation}}
\end{corollary}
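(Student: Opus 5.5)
The plan is to obtain Corollary~\ref{thm_bit_error_bound} as a direct specialization of the architecture-independent chain in Section~\ref{sec_theory}. Nothing in Theorem~\ref{thm_training_convergence_i} is needed beyond the fact that it produces some fixed predictor. Fix the trained ONN parameters $(\mathbf W_i(k),\bm a_i)$; they depend only on the training set $\mathcal S_i$ and the random initialization. Both are independent of the held-out sample $\{(\bm Z_{i,m}^{\rm te},V_{i,m}^{\rm te})\}_{m=1}^{M_i}$. Conditioned on this training randomness, $\mathrm F_i(\cdot;\mathbf W_i(k),\bm a_i)$ is therefore a deterministic function. The held-out losses $L_{i,m}^{\rm te}$ of its clipped version $\bar{\mathrm F}_i$ are then i.i.d. and bounded in $[0,4]$. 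Proposition~\ref{prop_heldout_certificate} applies, giving $\mathcal E_{\mathcal D_i}(\bar{\mathrm F}_i)\le\mathcal U_i^{\rm te}(\delta_i)$ with conditional probability at least $1-\delta_i$. Integrating over the training randomness preserves the unconditional probability $1-\delta_i$.

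On this event, I apply Propositions~\ref{prop_direct_excess_ber} and~\ref{prop_general_margin} to the predictor $\bar{\mathrm F}_i$. Since clipping to $[-1,1]$ preserves the sign of every output (and maps $0$ to $0$), the hard decision $\mathbb I(\bar{\mathrm F}_i\ge0)$ coincides with $\mathbb I(\mathrm F_i\ge0)$. Hence $P_{{\rm e},i}^{\rm ONN}$ equals the error probability of $\bar{\mathrm F}_i$. Proposition~\ref{prop_direct_excess_ber} then yields $P_{{\rm e},i}^{\rm ONN}\le P_{{\rm e},i}^{\rm MAP}+\sqrt{\mathcal E_{\mathcal D_i}(\bar{\mathrm F}_i)}\le P_{{\rm e},i}^{\rm MAP}+\sqrt{\mathcal U_i^{\rm te}(\delta_i)}$. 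For every $\rho\in(0,1)$, Proposition~\ref{prop_general_margin} yields $P_{{\rm e},i}^{\rm ONN}\le P_{{\rm e},i}^{\rm MAP}+\alpha_i(\rho)+\mathcal U_i^{\rm te}(\delta_i)/\rho^2$, using monotonicity in the MSE. Taking the infimum over $\rho$ and then the minimum of the two bounds gives $P_{{\rm e},i}^{\rm MAP}+\Gamma_i^{\rm te}(\delta_i)$. The trivial bound $P_{{\rm e},i}^{\rm ONN}\le1$ supplies the outer $\min\{1,\cdot\}$, which gives \eqref{eq_bit_error_bound_general}; equivalently, this is \eqref{eq_nn_heldout_bit_certificate} with $\mathrm F_i$ specialized to the ONN.

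The main point requiring care is the independence and conditioning step. The certificate must be applied to a predictor that is fixed before the held-out sample is drawn; otherwise Hoeffding or empirical Bernstein does not hold for the selected model. I would stress that the ONN training (initialization, gradient descent, choice of stopping iteration $k$) uses only $\mathcal S_i$, so any data-dependent selection must also avoid the held-out set, as the hypothesis assumes. A minor secondary point is the tie case $\mathrm F_i=0$: clipping leaves it at zero, so the decision rule with $\ge$ is unchanged. No width or step-size condition from Theorem~\ref{thm_training_convergence_i} enters; those conditions only affect how small $\mathcal U_i^{\rm te}(\delta_i)$ is likely to be, not the validity of the bound.
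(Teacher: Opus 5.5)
Your proposal is correct and follows essentially the same route as the paper, which obtains the corollary by specializing \eqref{eq_nn_heldout_bit_certificate}. That specialization uses the held-out certificate of Proposition~\ref{prop_heldout_certificate} to bound the population MSE of the sign-preserving clipped predictor, and then Propositions~\ref{prop_direct_excess_ber} and~\ref{prop_general_margin} to convert that bound into the direct and margin excess terms defining $\Gamma_i^{\rm te}(\delta_i)$, with your explicit conditioning on the training randomness and treatment of the tie case $\mathrm F_i=0$ making precise steps the paper leaves implicit.
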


{Averaging the bitwise certificates gives the BER guarantee for the ONN decoder}
{
\begin{equation}
\label{eq_onn_ber_certificate}
\begin{aligned}
P_{\rm BER}^{\rm ONN}
&\triangleq
\frac{1}{K}\sum_{i\in\mathcal A_K}P_{{\rm e},i}^{\rm ONN}\\
&\le
\frac{1}{K}\sum_{i\in\mathcal A_K}
\min\!\left\{1,
P_{{\rm e},i}^{\rm MAP}+\Gamma_i^{\rm te}(\delta_i)
\right\}.
\end{aligned}
\end{equation}}
{The corresponding block-level certificate is}
{
\begin{equation}
\label{eq_onn_bler_certificate}
P_{\rm BLER}^{\rm ONN}
\le
\sum_{i\in\mathcal A_K}
\min\!\left\{1,
P_{{\rm e},i}^{\rm MAP}+\Gamma_i^{\rm te}(\delta_i)
\right\}.
\end{equation}}
For sequential deployment, substituting the certified bitwise bounds for the ONN decoder into~\eqref{eq_bler_chain_bound_v2} gives the associated first-error BLER characterization. These expressions instantiate the architecture-independent reliability certificate for the three-layer ONN decoder.

\section{Numerical Results}
\label{sec_experiments}

{The numerical evaluation is organized in three stages around the $(8,4)$ short-code setting, denoted N8, and the N128 long-code setting. N8-Clean examines empirical convergence of the three-layer ONN decoder through Gram positivity, MSE contraction, and row-wise locality. N8-Noisy evaluates the transition from trained predictors to held-out population-MSE certificates and SC-referenced reliability bounds. N128 then assesses architecture-independent reliability certification and engineering performance on a separate neural decoder over AWGN and block-Rayleigh channels.}

{The two N8 evaluations use fixed output signs and full-batch gradient descent. N8-Clean uses a Gram-based bit-dependent learning rate, whereas N8-Noisy uses a separately calibrated constant learning rate for a controlled comparison. N128 uses dynamically generated AWGN samples and adaptive moment estimation with decoupled weight decay (AdamW). This organization separates constructive convergence evidence from architecture-independent certification and communication-performance assessment.}

The experiments use Python 3.9.16 and PyTorch 2.4.0. Graphics processing uses Compute Unified Device Architecture (CUDA) 12.4 on an NVIDIA GeForce RTX 4090 graphics processing unit (GPU). The processor is an Intel Core i7-13700KF. Random seeds, sample counts, optimizer settings, schedules, validation rules, checkpoint intervals, and evaluation budgets are stated separately for every branch below.

\subsection{{N8-Clean Convergence Evaluation}}
\label{subsec_short_code_setup_v2}

We use a polar code with $(N,K)=(8,4)$ and enumerate all $2^K=16$ messages. The fixed training set contains one noiseless BPSK observation $\bm y=\bm s$ for each message, so that $D=16$. Exhaustive messages remove sampling imbalance, and noiseless observations remove channel variation from the optimization trajectory. This makes Gram positivity, contraction, and row-wise locality directly measurable. Each bitwise input is divided by the square root of its input dimension, which preserves posterior information and gives a deterministic finite normalization constant $C_z$.

Each message-bit predictor uses the three-layer architecture in~\eqref{eq_three_layer_model}. The output signs are sampled once and fixed, and $\mathbf W_i$ is trained without momentum, weight decay, scheduling, or data refresh. For each bit-channel, the closed-form limiting ReLU Gram matrix is evaluated on its fixed training inputs and the learning rate is set as
\begin{equation}
\eta_i=c_\eta\frac{\widehat\lambda_{0,i}}{D},
\label{eq_short_lr_v2}
\end{equation}
{where $\widehat\lambda_{0,i}=\lambda_{\min}(\mathbf H_i^\infty)$ and $c_\eta=0.5$. The same bit-specific $\eta_i$ is used for every tested width. We test the computationally feasible power-of-two range $B\in\{128,256,512,1024,2048\}$. The conservative sufficient-width condition lies above this range. The selected widths provide a controlled common-budget setting for examining MSE contraction and the predicted $B^{-1/2}$ row-wise locality over $40000$ steps. Table~\ref{tab_short_setup_v2} summarizes the N8-Clean configuration.}

\begin{table}[t]
\centering
\caption{{N8-Clean Short-Code Configuration}}
\label{tab_short_setup_v2}
\renewcommand{\arraystretch}{1.12}
\begin{tabular}{@{}>{\raggedright\arraybackslash}p{0.34\columnwidth}p{0.58\columnwidth}@{}}
\toprule
Parameter & Value \\
\midrule
Polar code & $(N,K)=(8,4)$ \\
Training set & $D=16$, all messages, noiseless \\
Widths & $128,256,512,1024,2048$ \\
Optimizer & Plain full-batch gradient descent \\
Output layer & Random signs, fixed \\
Learning-rate constant & $c_\eta=0.5$ \\
Training steps & $40000$ per bit-channel \\
\bottomrule
\end{tabular}
\end{table}
\begin{table}[t]
\centering
\caption{{N8-Clean Limiting-Gram Eigenvalues and Learning Rates}}
\label{tab_short_kernel_v2}
\renewcommand{\arraystretch}{1.12}
\begin{tabular}{ccc}
\toprule
Bit $i$ & $\widehat\lambda_{0,i}$ & $\eta_i=0.5\widehat\lambda_{0,i}/16$ \\
\midrule
4 & $0.3636364$ & $0.01136364$ \\
6 & $0.2467188$ & $0.00770996$ \\
7 & $0.2133078$ & $0.00666587$ \\
8 & $0.1959114$ & $0.00612223$ \\
\bottomrule
\end{tabular}
\end{table}
\label{subsec_short_results_v2}

Fig.~\ref{fig_short_train_v2} presents the fixed-sample training trajectories, while Table~\ref{tab_short_kernel_v2} reports the positive limiting-Gram eigenvalues and the bit-specific learning rates used at every tested width. All $20$ bit-and-width trajectories decrease monotonically at every recorded update and remain below their initialized contraction reference curves. The final MSE values confirm substantial contraction throughout the tested width range. These results provide controlled empirical evidence for the convergence component of the constructive route based on the ONN decoder.

Fig.~\ref{fig_short_drift_v2} reports the maximum neuron-wise displacement. The fitted mean and maximum displacements have logarithmic slopes of $-0.4859$ and $-0.4693$, respectively, with coefficients of determination above $0.97$. The limited variation of $\sqrt{B}$ times the final displacement is consistent with the predicted $B^{-1/2}$ row-wise locality.

\begin{figure}[t]
\centering
\includegraphics[width=3.35in]{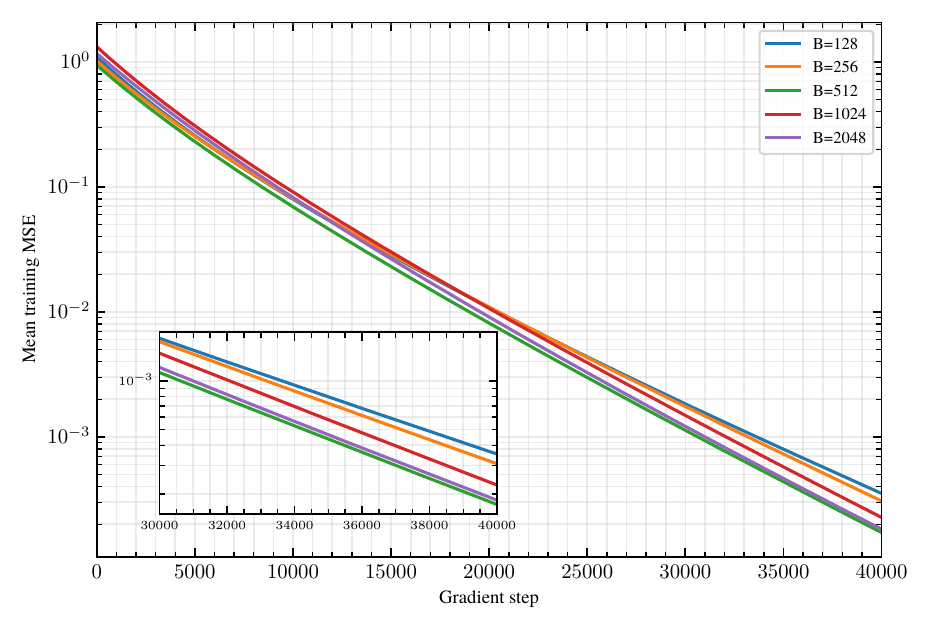}
\caption{Mean fixed-sample training MSE for five representative {N8-Clean} widths. Each curve averages the four information-bit predictors.}
\label{fig_short_train_v2}
\end{figure}

\begin{figure}[t]
\centering
\includegraphics[width=3.35in]{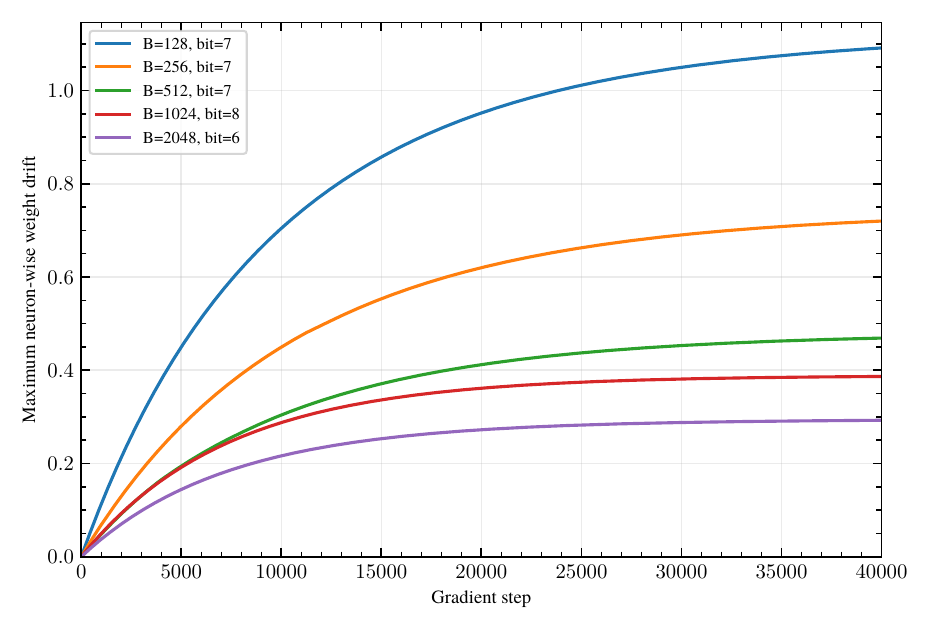}
\caption{Maximum neuron-wise displacement for five representative {N8-Clean} widths. Each curve selects the information bit with the largest final displacement at that width.}
\label{fig_short_drift_v2}
\end{figure}

\subsection{{N8-Noisy Reliability Evaluation}}
\label{subsec_large_width_core_v2}

The noisy evaluation uses all $16$ messages with $8192$ independent AWGN realizations per message at $\mathrm{E}_{\mathrm{b}}/\mathrm{N}_0=2$ dB, giving $D=131072$. The balanced per-message allocation controls source-pattern imbalance, while $2$ dB places the code in a nonsaturated reliability region where both regression and BER differences remain visible. Independent validation and test caches contain $4096$ and $16000$ frames, respectively, to separate model selection from the final reliability measurements. All four information bits $i\in\{4,6,7,8\}$ are trained at each experimental width
\begin{equation}
B\in\{2^3,\ldots,2^{18}\},
\end{equation}
{so that $B/D\in\{2^{-14},\ldots,2^1\}$. The power-of-two grid spans a broad controlled range for characterizing the transition of MSE, BER, and locality, with its upper end remaining below the conservative sufficient-width condition. Training uses fixed output signs, full-batch gradient descent, a constant learning rate of $30$, $10000$ steps, and accumulation chunks of $4096$. The learning rate is selected through a preliminary stability assessment at $B=32768$ and then held fixed to isolate the effect of $B$. Table~\ref{tab_large_width_setup_v2} summarizes the protocol.}

\begin{table}[t]
\centering
\caption{{N8-Noisy Reliability Evaluation Configuration}}
\label{tab_large_width_setup_v2}
\renewcommand{\arraystretch}{1.08}
\small
\begin{tabular}{@{}>{\raggedright\arraybackslash}p{0.34\columnwidth}p{0.58\columnwidth}@{}}
\toprule
Parameter & Value \\
\midrule
Code/channel & $(8,4)$ polar code, AWGN, $\mathrm{E}_{\mathrm{b}}/\mathrm{N}_0=2$ dB \\
Train/validation/test & $131072/4096/16000$ fixed frames \\
Information bits & $4,6,7,8$ \\
Widths $B$ & \begin{tabular}[t]{@{}l@{}}
$8,16,32,64,128,256,512,1024,$\\
$2048,4096,8192,16384,32768,$\\
$65536,131072,262144$
\end{tabular} \\
Width ratios $B/D$ & $2^{-14},\ldots,2^1$ \\
Optimizer & Full-batch gradient descent, fixed output signs \\
Learning rate/schedule & $30$/constant \\
Steps/chunk size & $10000/4096$ \\
Gram proxy & $64$ samples/message \\
\bottomrule
\end{tabular}
\end{table}

For every bit and width, the reported quantities include training MSE, neuron-wise drift, validation and test MSE, empirical bound components, and bitwise BER. The four-bit posterior-reference and sequential modes provide BER and BLER. All tested widths, SC, and SCL with list size eight (SCL-8) use the same frozen test cache, while the exact SC synthesized-channel soft reference computed by box-plus recursion provides the per-bit soft-reference BER.

Fig.~\ref{fig_large_width_summary_v2}(a), Fig.~\ref{fig_large_width_summary_v2}(b), and Fig.~\ref{fig_large_width_summary_v2}(c) present the regression and decoding trends in a consistent width order. Fig.~\ref{fig_large_width_summary_v2}(a) shows that bit $4$ continues to benefit from additional basis functions, whereas bit $8$ attains its minimum validation MSE at a smaller width and then increases gradually. Bits $6$ and $7$ exhibit intermediate trends. Figs.~\ref{fig_large_width_summary_v2}(b) and~\ref{fig_large_width_summary_v2}(c) compare the neural modes with exact bitwise MAP (bit-MAP) and block maximum-likelihood (ML) baselines. The posterior-reference BER improves rapidly over the smaller and intermediate widths, while the sequential BER changes little. Reference and sequential BLER coincide in this experiment, so the duplicate reference curve is omitted in Fig.~\ref{fig_large_width_summary_v2}(c). Table~\ref{tab_large_width_summary_v2} further shows that row-wise displacement and activation flips decrease substantially as $B$ grows.

\begin{table*}[t]
\centering
\caption{{N8-Noisy} controlled comparison at step $10000$. Test MSE and locality quantities are four-bit means. The posterior-reference series uses transmitted previous bits, while the sequential decoder uses its earlier decisions.}
\label{tab_large_width_summary_v2}
\renewcommand{\arraystretch}{1.08}
\setlength{\tabcolsep}{2.8pt}
\scriptsize
\begin{tabular}{rrrrrrrrrr}
\toprule
$B$ & $B/D$ & Raw test MSE & Reference BER & Sequential BER & Reference BLER & Sequential BLER & Max-row drift & Relative drift & Flip rate (\%) \\
\midrule
$8$      & $1/16384$ & $0.26619$ & $0.16902$ & $0.21033$ & $0.45444$ & $0.45444$ & $21.382$ & $5.8485$ & $40.97$ \\
$16$     & $1/8192$  & $0.11907$ & $0.09933$ & $0.14505$ & $0.27663$ & $0.27663$ & $22.375$ & $6.2066$ & $38.64$ \\
$32$     & $1/4096$  & $0.08527$ & $0.09163$ & $0.14380$ & $0.26875$ & $0.26875$ & $23.991$ & $6.9101$ & $34.88$ \\
$64$     & $1/2048$  & $0.05811$ & $0.08731$ & $0.14222$ & $0.26306$ & $0.26306$ & $21.397$ & $5.8368$ & $30.44$ \\
$128$    & $1/1024$  & $0.04585$ & $0.08469$ & $0.13920$ & $0.25738$ & $0.25738$ & $17.269$ & $4.9173$ & $27.55$ \\
$256$    & $1/512$ & $0.03897$ & $0.08411$ & $0.13880$ & $0.25581$ & $0.25581$ & $12.516$ & $3.5465$ & $24.83$ \\
$512$    & $1/256$ & $0.03631$ & $0.08419$ & $0.13895$ & $0.25594$ & $0.25594$ & $8.789$ & $2.4583$ & $21.88$ \\
$1024$   & $1/128$ & $0.03506$ & $0.08472$ & $0.13881$ & $0.25588$ & $0.25588$ & $5.805$ & $1.6176$ & $18.63$ \\
$2048$   & $1/64$  & $0.03488$ & $0.08513$ & $0.13897$ & $0.25575$ & $0.25575$ & $4.285$ & $1.1762$ & $15.16$ \\
$4096$   & $1/32$ & $0.03588$ & $0.08556$ & $0.13887$ & $0.25619$ & $0.25619$ & $3.336$ & $0.9330$ & $11.91$ \\
$8192$   & $1/16$ & $0.03728$ & $0.08598$ & $0.13830$ & $0.25544$ & $0.25544$ & $2.754$ & $0.7728$ & $9.12$ \\
$16384$  & $1/8$  & $0.03945$ & $0.08698$ & $0.13855$ & $0.25631$ & $0.25631$ & $2.229$ & $0.6181$ & $6.64$ \\
$32768$  & $1/4$  & $0.04214$ & $0.08764$ & $0.13823$ & $0.25588$ & $0.25588$ & $1.894$ & $0.5258$ & $4.82$ \\
$65536$  & $1/2$  & $0.04429$ & $0.08770$ & $0.13869$ & $0.25656$ & $0.25656$ & $1.536$ & $0.4251$ & $3.47$ \\
$131072$ & $1$    & $0.04605$ & $0.08930$ & $0.13889$ & $0.25806$ & $0.25806$ & $1.347$ & $0.3715$ & $2.49$ \\
$262144$ & $2$    & $0.04725$ & $0.08978$ & $0.13906$ & $0.25819$ & $0.25819$ & $1.084$ & $0.2993$ & $1.77$ \\
\bottomrule
\end{tabular}
\end{table*}

\begin{figure*}[t]
\centering
\includegraphics[width=\textwidth]{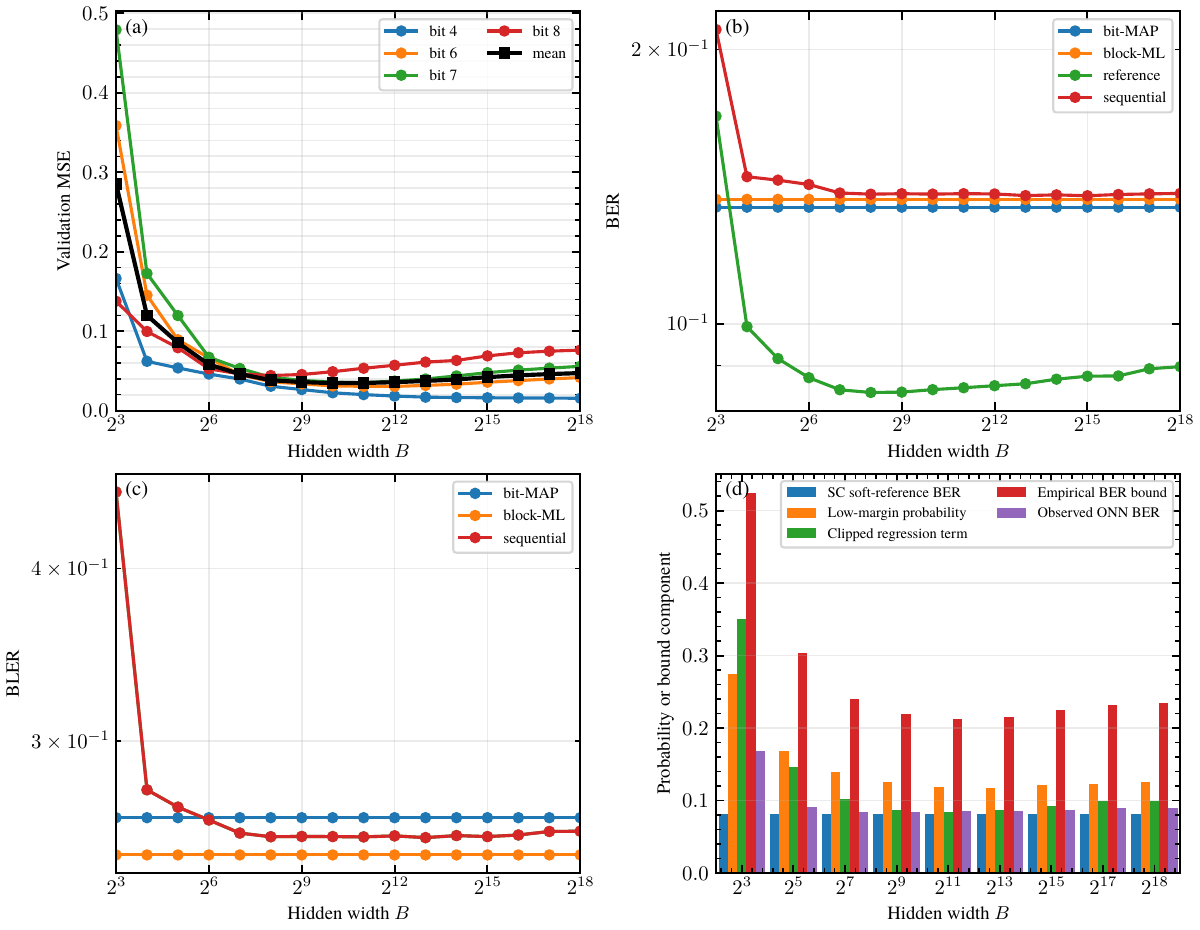}
\caption{{N8-Noisy} results in a two-by-two layout. (a) Validation MSE by information bit and width $B$. (b) BER for the exact short-code baselines, posterior reference, and sequential decoder. (c) BLER for the same decoder set. The duplicate reference entry is omitted because reference and sequential BLER coincide. (d) Observed BER and representative components of the empirical BER bound.}
\label{fig_large_width_summary_v2}
\end{figure*}

Fig.~\ref{fig_large_width_summary_v2}(d) reports the SC soft-reference BER, the low-margin probability and clipped regression term evaluated at the per-bit value of $\rho$ that minimizes the empirical plug-in margin expression, the empirical BER bound obtained from the minimum of the optimized plug-in and direct square-root bounds, and the observed ONN decoder BER. The empirical bound decreases sharply through the intermediate widths and rises mildly at the largest widths as the regression contribution increases.

\subsection{Three-Seed Stability at the Largest Tested Configuration}
\label{subsec_large_width_stress_v2}

The stability evaluation fixes $D=131072$ and $B=262144$ and trains bit $8$ with model seeds $8201$, $8202$, and $8203$. This largest tested configuration quantifies sensitivity to random initialization under the same fixed data caches and $10000$-step budget. A separately calibrated constant learning rate of $40$ is used for all three runs.

Table~\ref{tab_large_width_stress_v2} reports mean $\pm$ sample standard deviation. The final train, validation, raw test MSE, BER, displacement, and activation-flip statistics are tightly clustered across seeds.

\begin{table*}[t]
\centering
\caption{Three-Seed Stability at the Largest Tested Configuration After $10000$ Steps}
\label{tab_large_width_stress_v2}
\renewcommand{\arraystretch}{1.12}
\setlength{\tabcolsep}{4pt}
\scriptsize
\begin{tabular}{cccccc}
\toprule
Train MSE & Validation MSE & Raw test MSE & {ONN decoder BER} & Relative drift & Activation flips \\
\midrule
$0.05065\pm0.00009$ & $0.06942\pm0.00005$ & $0.06968\pm0.00023$ & $0.01654\pm0.00031$ & $0.3965\pm0.0136$ & $0.02139\pm0.00004$ \\
\bottomrule
\end{tabular}
\end{table*}

For every seed, the best recorded validation MSE occurs at step $10000$, and the late-stage training and validation slopes remain negative. The MSE, displacement, and activation-flip statistics are closely clustered across seeds, which indicates consistent behavior across the tested initializations.
\subsection{N128 Reliability Certification and Engineering Performance}
\label{subsec_n128_deployment_audit}

\subsubsection{N128 Decoder and Evaluation Protocol}
\label{subsec_n128_model_protocol}

The N128 decoder operates on the exact butterfly factor graph of the $(128,64)$ polar code. Each information-bit predictor performs $17$ synchronous message-passing iterations using shared width-$32$ Sigmoid Linear Unit (SiLU) MLP residual rules for the exclusive-or (XOR) and wire updates. Four-dimensional stage and iteration embeddings identify the graph location and update depth. Seventeen iterations permit repeated information exchange across the seven graph stages, while the shared width-$32$ rules retain nonlinear update capacity with a compact parameterization. The complete decoder contains $64$ independently trained bitwise predictors.

Training uses dynamically generated AWGN samples balanced over the signal-to-noise ratio (SNR) grid $\mathrm{E}_{\mathrm{b}}/\mathrm{N}_0\in\{0,1,2,3,4\}$ dB. This range spans low- and high-reliability operating regions without concentrating training at one channel condition. The conditional min-sum LLR target is selected as a sign-compatible soft reliability statistic that approximates the conditional posterior decision information propagated on the polar factor graph. {The independent held-out certificate subsequently quantifies population reliability from signed labels, thereby separating the engineering guarantee from exact calibration of the training target to the posterior regression function.} AdamW uses learning rate $0.002$, batch size $256$, $3000$ updates per predictor, and gradient clipping at $1.0$. Validation uses $2048$ fixed samples at $2$ dB. The AWGN-trained model bank is tested on both AWGN and block-Rayleigh channels with perfect CSI. AWGN margins use Proposition~\ref{prop_alpha_ga}, while Rayleigh margins and population certificates are computed empirically.

Common-frame comparisons include the N128 posterior-reference decoder, the N128 sequential decoder, SC, and SCL-8 without CRC assistance. The posterior-reference decoder serves as the experimental comparator for posterior approximation. Adaptive frame budgets preserve informative estimates as the AWGN error rate decreases, and each Rayleigh point uses a common $10000$-frame budget. Table~\ref{tab_long_protocol_v2} summarizes the principal model, training, and evaluation choices.

\begin{table}[!b]
\centering
\caption{N128 Training and Evaluation Protocol}
\label{tab_long_protocol_v2}
\renewcommand{\arraystretch}{1.12}
\begin{tabular}{@{}>{\raggedright\arraybackslash}p{0.30\columnwidth}>{\raggedright\arraybackslash}p{0.56\columnwidth}@{}}
\toprule
Parameter & Value \\
\midrule
Polar code & $(N,K)=(128,64)$ \\
Neural model & {Shared-message-update polar factor-graph neural decoder} \\
Message passing & $17$ synchronous iterations \\
Message rules & Shared width-$32$ SiLU residual MLPs \\
Training data & Dynamic AWGN samples \\
Training SNR & Balanced $\{0,1,2,3,4\}$ dB \\
Optimizer & AdamW, learning rate $0.002$ \\
Training budget & $3000$ updates/bit, batch $256$ \\
Channels & AWGN and block-Rayleigh with perfect CSI \\
Baselines & SC and SCL-8 without CRC \\
Test SNR grid & AWGN from $0$ to $8$ dB and Rayleigh from $0$ to $10$ dB \\
Decoder frames & Common adaptive frames for each channel point \\
\bottomrule
\end{tabular}
\end{table}

\subsubsection{BER and BLER Results}
\label{subsec_long_results_v2}

The N128 evaluation covers AWGN from $0$ to $8$ dB and block-Rayleigh fading from $0$ to $10$ dB. All compared decoders use common frames. AWGN uses adaptive budgets, capped at $3\times10^6$ frames, to retain informative high-SNR error counts, while each Rayleigh point uses $10000$ frames. The signed-label certificates use a separate set of $10000$ independent held-out samples at each operating point.

\begin{figure*}[t]
\centering
\includegraphics[width=0.93\textwidth]{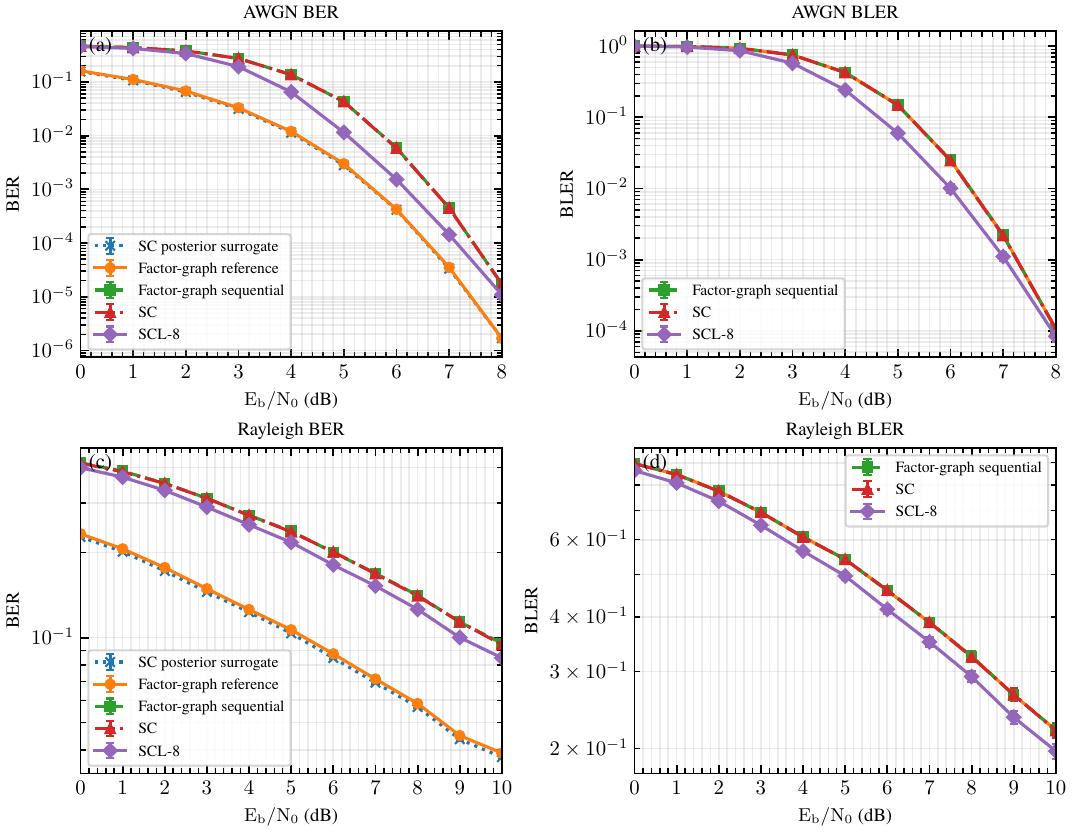}
\caption{N128 cross-channel performance from one shared AWGN-trained model bank. The top row shows AWGN from $0$ to $8$ dB, and the bottom row shows block-Rayleigh from $0$ to $10$ dB. The left column reports BER, and the right column reports BLER. {BER panels include the exact SC synthesized-channel soft reference computed by box-plus recursion, the N128 posterior-reference decoder, the N128 sequential decoder, SC, and SCL-8.}}
\label{fig_long_code_performance_2x2_v2}
\end{figure*}

\begin{figure*}[t]
\centering
\includegraphics[width=0.98\textwidth]{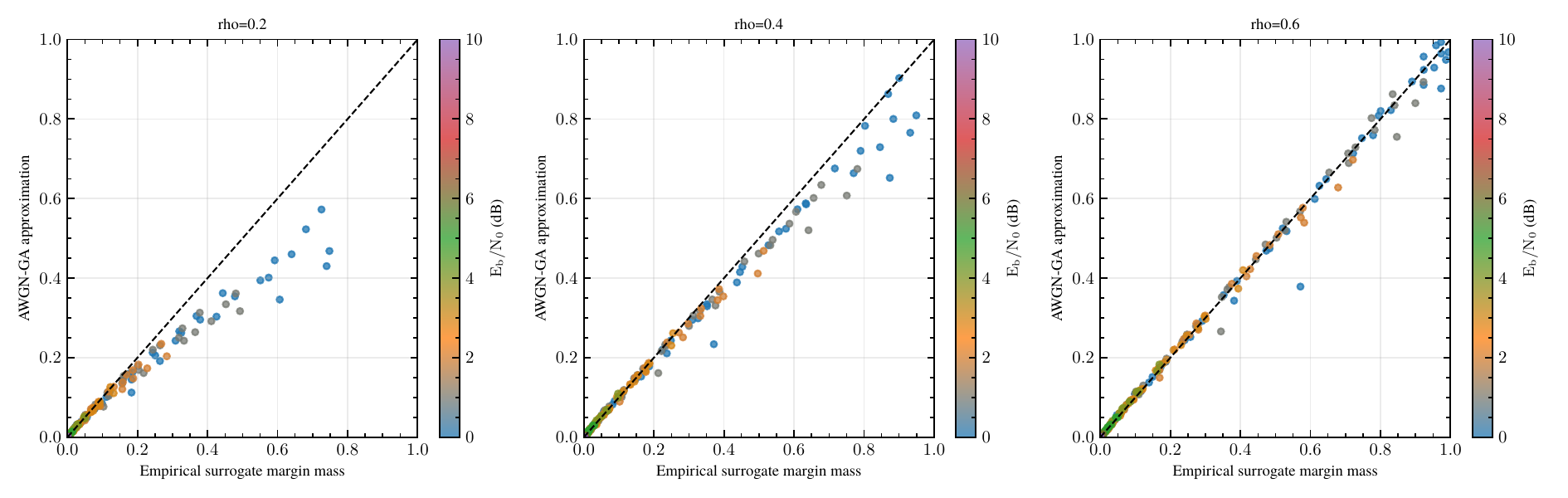}
\caption{AWGN GA analysis for $\rho\in\{0.2,0.4,0.6\}$. The approximation error is largest at low SNR and small $\rho$, and its absolute value decreases as SNR increases.}
\label{fig_long_awgn_ga_audit_v2}
\end{figure*}

Fig.~\ref{fig_long_code_performance_2x2_v2} visualizes the BER and BLER trends across both channel grids. The maximum absolute BER gap between sequential decoding and SC is $0.0013172$ on AWGN and $0.0019328$ on Rayleigh. The BER gap between the posterior-reference predictor and the soft reference ranges from $1.29\times10^{-8}$ to $0.0046484$ on AWGN and from $0.0012938$ to $0.0058266$ on Rayleigh. SC and SCL-8 provide common-frame algebraic baselines. {The measured sequential-decoding gaps demonstrate empirical proximity to the SC reference across the evaluated grids. The held-out certificate separately determines whether a prescribed excess-BER tolerance is formally satisfied.}

Fig.~\ref{fig_long_awgn_ga_audit_v2} shows that the AWGN GA approximation error is most visible at low SNR and small $\rho$ and decreases at higher SNR. {Rayleigh certificates use empirical SC soft-reference margins.}

\subsubsection{Practical Tightness of the Performance Conversion}
\label{subsec_long_certificate_v2}

The independent common-frame evaluation also permits a direct numerical assessment of the bounds. For frame $t$, define the average clipped signed-label loss
\begin{equation}
\widehat L_t
=
\frac{1}{K}\sum_{i\in\mathcal A_K}
{\left(\bar{\mathrm F}_i(\bm Z_{i,t})-V_{i,t}\right)^2}
\in[0,4].
\end{equation}
With $M$ independent frames and confidence parameter $\delta$, Hoeffding's inequality gives
\begin{equation}
\label{eq_long_aggregate_certificate_v2}
\mathcal U_{\rm agg}(\delta)
=
\min\!\left\{4,
\frac{1}{M}\sum_{t=1}^{M}\widehat L_t
+4\sqrt{\frac{\log(1/\delta)}{2M}}
\right\}.
\end{equation}
A variance-sensitive alternative uses the empirical frame-loss variance
{
\begin{equation}
\widehat{\mathcal V}_{\rm agg}
\triangleq
\frac{1}{M-1}
\sum_{t=1}^{M}
\left(
\widehat L_t-\frac{1}{M}\sum_{s=1}^{M}\widehat L_s
\right)^2
\end{equation}
and yields the empirical Bernstein certificate
\begin{equation}
\label{eq_long_aggregate_eb_certificate_v2}
\begin{aligned}
\mathcal U_{\rm agg}^{\rm EB}(\delta)
=\min\!\Bigg\{4,
&\frac{1}{M}\sum_{t=1}^{M}\widehat L_t
+\sqrt{\frac{2\widehat{\mathcal V}_{\rm agg}\log(2/\delta)}{M}}\\
&+\frac{28\log(2/\delta)}{3(M-1)}
\Bigg\}.
\end{aligned}
\end{equation}
}
The signed-label risk upper-bounds the average posterior-target MSE. Combining Proposition~\ref{prop_direct_excess_ber}, Jensen's inequality, and~\eqref{eq_long_aggregate_certificate_v2} yields the aggregate excess-BER certificate
\begin{equation}
\label{eq_long_excess_certificate_v2}
{P_{\rm BER}^{\rm N128,ref}}
-\frac{1}{K}\sum_{i\in\mathcal A_K}P_{{\rm e},i}^{\rm MAP}
\le
\sqrt{\mathcal U_{\rm agg}(\delta)}.
\end{equation}
This derivation treats each frame-average loss as one bounded independent observation and therefore does not require bit errors within a frame to be independent. Replacing $\mathcal U_{\rm agg}(\delta)$ by $\mathcal U_{\rm agg}^{\rm EB}(\delta)$ in~\eqref{eq_long_excess_certificate_v2} gives the corresponding variance-sensitive excess-BER certificate. We use $M=10000$ and $\delta=0.05$.

Over the full displayed grids, Table~\ref{tab_long_bound_audit_v2} distinguishes the components of the performance conversion. The reference gap is the absolute BER difference between the posterior-reference predictor and the exact SC synthesized-channel soft reference computed by box-plus recursion. The direct and margin columns are empirical plug-in quantities based on held-out soft-reference MSE and empirical low-margin probability. The final column reports the Hoeffding and empirical Bernstein excess-BER certificates from the same $10000$ independent held-out frames. The Hoeffding certificate uses the deterministic loss range $[0,4]$, which follows from the clipped predictor range $[-1,1]$ and the signed label alphabet $\{-1,1\}$. This global range is necessary because small average regression error does not exclude a rare sample with loss arbitrarily close to four. Replacing it by the largest observed loss would not preserve the stated distribution-free confidence guarantee for unseen channel realizations.

The worst-case range term keeps the Hoeffding certificate comparatively loose. The empirical Bernstein certificate retains the same valid global range but scales its leading deviation term with the observed frame-loss variance. It is therefore tighter when the held-out losses are concentrated, with the largest reduction occurring in the low-variance high-SNR AWGN regime. At 8 dB, the certificate decreases from $0.2213$ to $0.0587$, while the Rayleigh reductions remain consistent across the full grid. The posterior-reference outputs also remain well calibrated to the SC soft reference on the held-out frames. Calibration, soft-reference MSE, and the signed-label certificate provide complementary checks of posterior approximation and population reliability.

\begin{table*}[t]
\centering
\caption{N128 Bound-Tightness Analysis Over the Full Displayed Grids}
\label{tab_long_bound_audit_v2}
\scriptsize
\setlength{\tabcolsep}{2.2pt}
\begin{tabular}{@{}ccrrrrrrr@{}}
\toprule
Channel & $\mathrm{E}_{\mathrm{b}}/\mathrm{N}_0$ (dB) & \shortstack{{SC soft-reference}\\BER} & \shortstack{{N128 posterior}\\{reference BER}} & \shortstack{{Absolute posterior}\\{reference gap}} & \shortstack{{Soft-reference}\\MSE} & \shortstack{{Direct excess}\\{term}} & \shortstack{{Best margin}\\{excess term}} & \shortstack{{Held-out 95\% excess-BER}\\{Hoeffding / empirical Bernstein}} \\
\midrule
AWGN & 0 & 0.1560 & 0.1606 & 0.0046 & 0.0146 & 0.1207 & 0.3734 & 0.6762 / 0.6447 \\
AWGN & 1 & 0.1068 & 0.1110 & 0.0041 & 0.0109 & 0.1043 & 0.2432 & 0.5871 / 0.5502 \\
AWGN & 2 & 0.0647 & 0.0676 & 0.0030 & 0.0077 & 0.0876 & 0.1393 & 0.4844 / 0.4384 \\
AWGN & 3 & 0.0313 & 0.0329 & 0.0016 & 0.0045 & 0.0673 & 0.0664 & 0.3791 / 0.3169 \\
AWGN & 4 & 0.0114 & 0.0120 & 0.0006 & 0.0019 & 0.0437 & 0.0241 & 0.2913 / 0.2016 \\
AWGN & 5 & 0.0028 & 0.0030 & {$1.7656\times10^{-4}$} & 0.0005 & 0.0227 & 0.0059 & 0.2400 / 0.1124 \\
AWGN & 6 & {$4.1719\times10^{-4}$} & {$4.2422\times10^{-4}$} & {$7.0313\times10^{-6}$} & {$7.2844\times10^{-5}$} & 0.0085 & {$8.4903\times10^{-4}$} & 0.2242 / 0.0705 \\
AWGN & 7 & {$3.3396\times10^{-5}$} & {$3.5285\times10^{-5}$} & {$1.8887\times10^{-6}$} & {$3.8809\times10^{-6}$} & 0.0020 & {$5.3960\times10^{-5}$} & 0.2215 / 0.0602 \\
AWGN & 8 & {$1.6916\times10^{-6}$} & {$1.6787\times10^{-6}$} & $1.2913\times10^{-8}$ & {$8.6107\times10^{-8}$} & {$2.9344\times10^{-4}$} & {$2.1265\times10^{-6}$} & 0.2213 / 0.0587 \\
Rayleigh & 0 & 0.2273 & 0.2331 & 0.0058 & 0.0346 & 0.1859 & 0.5875 & 0.7548 / 0.7302 \\
Rayleigh & 1 & 0.2016 & 0.2063 & 0.0047 & 0.0298 & 0.1726 & 0.5163 & 0.7141 / 0.6881 \\
Rayleigh & 2 & 0.1722 & 0.1770 & 0.0048 & 0.0252 & 0.1588 & 0.4435 & 0.6685 / 0.6408 \\
Rayleigh & 3 & 0.1451 & 0.1489 & 0.0038 & 0.0212 & 0.1455 & 0.3736 & 0.6231 / 0.5932 \\
Rayleigh & 4 & 0.1231 & 0.1258 & 0.0027 & 0.0176 & 0.1325 & 0.3146 & 0.5794 / 0.5470 \\
Rayleigh & 5 & 0.1036 & 0.1066 & 0.0030 & 0.0149 & 0.1221 & 0.2655 & 0.5410 / 0.5058 \\
Rayleigh & 6 & 0.0847 & 0.0876 & 0.0029 & 0.0120 & 0.1097 & 0.2175 & 0.4998 / 0.4608 \\
Rayleigh & 7 & 0.0692 & 0.0714 & 0.0021 & 0.0098 & 0.0988 & 0.1784 & 0.4630 / 0.4200 \\
Rayleigh & 8 & 0.0568 & 0.0584 & 0.0016 & 0.0080 & 0.0897 & 0.1464 & 0.4300 / 0.3827 \\
Rayleigh & 9 & 0.0437 & 0.0450 & 0.0013 & 0.0061 & 0.0781 & 0.1127 & 0.3923 / 0.3388 \\
Rayleigh & 10 & 0.0377 & 0.0390 & 0.0013 & 0.0053 & 0.0728 & 0.0967 & 0.3737 / 0.3164 \\
\bottomrule
\end{tabular}
\end{table*}

\section{Complexity and Scalability}
\label{sec_complexity}

\subsection{{Three-Layer ONN Decoder Complexity}}
{
Let $d_i=N+i-1$ denote the input dimension of the $i$-th predictor. A three-layer ONN decoder forward pass requires $\mathcal O(Bd_i)$ operations and $\mathcal O(Bd_i)$ stored input-to-hidden parameters. With independent ONN decoders for all $K$ message positions, one decoded codeword requires
\begin{equation}
\mathcal O\!\left(B\sum_{i\in\mathcal A_K}d_i\right)
=\mathcal O(KBN)
\end{equation}
operations and the same order of model storage, because $d_i\le 2N-1$. Full-batch training over $D$ samples has per-step arithmetic cost $\mathcal O(DB\sum_{i\in\mathcal A_K}d_i)$ if all bit-channels are updated, or $\mathcal O(DBd_i)$ when one bit-channel is trained at a time.
}

\subsection{{N128 Decoder Complexity and Scalability}}

Let $T$ denote the number of synchronous factor-graph iterations and let $H$ denote the shared message-rule width. Each iteration visits $\Theta(N\log N)$ XOR and wire relations. The residual message rules therefore require $\mathcal O(TN\log N\,H^2)$ neural arithmetic per bit predictor and $\mathcal O(N\log N)$ message-state storage. With $K$ independent bit predictors, the decoded-codeword arithmetic is
\begin{equation}
\mathcal O(KTN\log N\,H^2),
\end{equation}
while the shared-rule parameterization keeps the per-predictor model size independent of the number of factor instances. Table~\ref{tab_long_resources_v2} reports the measured training, parameter, and storage requirements of the $64$-predictor AWGN-trained N128 model bank.

\begin{table}[t]
\centering
\caption{Measured Resources of the Shared AWGN-Trained N128 Model Bank}
\label{tab_long_resources_v2}
\scriptsize
\setlength{\tabcolsep}{3.0pt}
\begin{tabular}{@{}lr@{}}
\toprule
Quantity & Shared AWGN-trained bank \\
\midrule
Independent bit models & 64 \\
Parameters per predictor & $1096$ \\
Total parameters & $70{,}144$ \\
Optimizer steps & $192{,}000$ \\
Generated AWGN samples & $49.152$ M \\
Training wall time & $4.114$ h \\
Checkpoint per predictor & $57{,}634$ bytes \\
Model-bank checkpoints & $3.52$~MiB \\
\bottomrule
\end{tabular}
\end{table}

The measured common-frame timings place the two N128 neural series below one second per $10000$ frames in the reported implementation, while SC and SCL-8 remain faster. The shared-rule parameterization supports batched graph execution and further parameter sharing across bit positions. For a rate-one-half extension from $(128,64)$ to $(1024,512)$ with the same iteration count and message-rule width, the per-predictor factor-graph arithmetic $\mathcal O(TN\log N\,H^2)$ grows by about $8(10/7)\approx11.4$. A bank that retains one predictor per information bit grows by another factor of eight, so decoded-codeword arithmetic grows by about $91$ and checkpoint storage grows by a factor of eight if the per-predictor parameterization is unchanged. The message-state memory of one predictor grows by about $11.4$. Training updates and generated samples also grow by a factor of eight when the same per-bit budget is retained. Parallel factor updates and codeword batching can exploit the graph structure, while parameter sharing across bit positions can reduce storage and training overhead.

\section{Conclusion}
\label{sec_conclusion}

This paper establishes a sufficient-condition framework for reliability certification in learning-enabled polar decoders. Its central result converts certified population MSE into an SC-referenced BER requirement independently of the neural architecture. The conversion determines when the performance of a trained predictor satisfies an engineering reliability specification rather than merely exhibiting favorable empirical error curves.

The framework also identifies a constructive route to the required MSE regime. A three-layer fixed-output ReLU ONN decoder provides provable empirical MSE convergence under explicit training conditions. Independent held-out data then determine whether the trained predictor reaches the population-MSE threshold required by the BER guarantee. The optimization result, population certificate, and reliability conversion thereby connect trainability to certified communication performance.

The short-code evaluations examine convergence of the constructive ONN decoder, while the N128 assessment addresses architecture-independent reliability certification and engineering performance on a distinct neural decoder. The principal value of the framework is the resulting verifiable engineering guarantee, which links MSE convergence to SC-referenced reliability. When the certified population-MSE threshold is attained, the certified SC-referenced bitwise BER satisfies the prescribed tolerance.

\appendices
\renewcommand{\thetheorem}{\Alph{section}.\arabic{theorem}}
\renewcommand{\theremark}{\Alph{section}.\arabic{remark}}
\renewcommand{\theproposition}{\Alph{section}.\arabic{proposition}}
\renewcommand{\theassumption}{\Alph{section}.\arabic{assumption}}
\renewcommand{\thelemma}{\Alph{section}.\arabic{lemma}}
\renewcommand{\thecorollary}{\Alph{section}.\arabic{corollary}}
\renewcommand\thefigure{A.\arabic{figure}}
\renewcommand\thetable{A.\arabic{table}}
\setcounter{table}{0}
\setcounter{figure}{0}
\section{Derivation of the Empirical Bernstein Certificates}
\label{app_empirical_bernstein}

This appendix derives the variance-sensitive terms in~\eqref{eq_heldout_empirical_bernstein} and~\eqref{eq_long_aggregate_eb_certificate_v2}. For the per-bit certificate, define the normalized held-out losses
{
\begin{equation}
X_{i,m}=L_{i,m}^{\rm te}/4,
\qquad m\in[M_i].
\label{eq_app_eb_normalized_loss}
\end{equation}
}
Then $X_{i,m}\in[0,1]$, and its empirical mean and unbiased sample variance satisfy
{
\begin{align}
\widehat{\mathcal R}_{X_i}
&=\frac{1}{4}\widehat{\mathcal R}_{i}^{\rm te},
\label{eq_app_eb_mean_scaling}\\
\widehat{\mathcal V}_{X_i}
&=\frac{1}{16}\widehat{\mathcal V}_{i}^{\rm te}.
\label{eq_app_eb_variance_scaling}
\end{align}
}
The empirical Bernstein inequality for independent observations in $[0,1]$ gives, with probability at least $1-\delta_i$,
{
\begin{equation}
\mathbb E[X_i]
\le
\widehat{\mathcal R}_{X_i}
+\sqrt{\frac{2\widehat{\mathcal V}_{X_i}\log(2/\delta_i)}{M_i}}
+\frac{7\log(2/\delta_i)}{3(M_i-1)}.
\label{eq_app_eb_unit_interval}
\end{equation}
}
Multiplying~\eqref{eq_app_eb_unit_interval} by four and substituting~\eqref{eq_app_eb_mean_scaling} and~\eqref{eq_app_eb_variance_scaling} yields
{
\begin{equation}
\mathbb E[L_i^{\rm te}]
\le
\widehat{\mathcal R}_{i}^{\rm te}
+\sqrt{\frac{2\widehat{\mathcal V}_{i}^{\rm te}\log(2/\delta_i)}{M_i}}
+\frac{28\log(2/\delta_i)}{3(M_i-1)}.
\label{eq_app_eb_rescaled}
\end{equation}
}
The conditional bias\text{-}variance decomposition in the proof of Proposition~\ref{prop_heldout_certificate} shows that $\mathcal E_{\mathcal D_i}(\bar{\mathrm F}_i)\le\mathbb E[L_i^{\rm te}]$. Clipping the right-hand side of~\eqref{eq_app_eb_rescaled} at four therefore gives $\mathcal U_i^{\rm EB}(\delta_i)$.

For the aggregate certificate, apply the same argument to $X_t=\widehat L_t/4$. The observations are independent across held-out frames and belong to $[0,1]$. Their sample variance equals $\widehat{\mathcal V}_{\rm agg}/16$. Rescaling the unit-interval inequality gives $\mathcal U_{\rm agg}^{\rm EB}(\delta)$ in~\eqref{eq_long_aggregate_eb_certificate_v2}. This argument uses frame-level independence and does not require independence among the $K$ bit losses within a frame.

\section{Proof of Theorem~\ref{thm_training_convergence_i}}
\label{app_proof_training}

In this appendix, we provide a complete proof of Theorem~\ref{thm_training_convergence_i}. The proof follows the same line as the discrete-time analysis in \cite{du2018gradient}, while explicitly accounting for the normalized empirical loss adopted in \eqref{eq_empirical_loss_i}.

\subsection{Normalized Loss and Gradient Expression}

For the $i$-th bit-channel, the normalized empirical loss is
\begin{equation}
\label{eq_app_loss_i}
\mathcal L_i(\mathbf{W}_i,\bm a_i)
=
\frac{1}{2D}
\sum_{\ell=1}^{D}
\left(
\mathrm F_i(\mathbf{W}_i,\bm a_i,\bm z_{i,\ell})
-
t_{i,\ell}
\right)^2,
\end{equation}
where
\begin{equation}
\label{eq_app_model_i}
\mathrm F_i(\mathbf{W}_i,\bm a_i,\bm z)
=
\frac{1}{\sqrt B}
\sum_{j=1}^{B}
a_{i,j}\sigma(\bm w_{i,j}^{\top}\bm z).
\end{equation}
The corresponding gradient with respect to the input-to-hidden weight vector $\bm w_{i,j}$ is
\begin{equation}
\begin{split}
\label{eq_app_grad_wr}
\frac{\partial \mathcal L_i(\mathbf{W}_i,\bm a_i)}{\partial \bm w_{i,j}}
=&~
\frac{1}{D\sqrt B}
\sum_{\ell=1}^{D}
\Big(
\mathrm F_i(\mathbf{W}_i,\bm a_i,\bm z_{i,\ell})-t_{i,\ell}
\Big)
\\&
\times a_{i,j}\bm z_{i,\ell}
\mathbb I\big(\bm w_{i,j}^{\top}\bm z_{i,\ell}\ge 0\big).
\end{split}
\end{equation}

Compared with the sum-form quadratic loss in \cite{du2018gradient}, \eqref{eq_app_grad_wr} differs by a factor of $1/D$. Consequently, the prediction dynamics of gradient descent remain governed by the same empirical Gram matrix, but with effective step size $\eta_i/D$.

\subsection{Positive Definiteness of the Limiting Gram Matrix}

Recall that the limiting Gram matrix associated with the $i$-th bit-channel inputs is defined as
\begin{equation}
\begin{split}
\label{eq_app_Hinf}
\big[\mathbf H_i^\infty\big]_{\ell,\ell'}
=&~
\mathbb E_{\bm w\sim\mathcal N(\bm 0,\mathbf I)}
[
\bm z_{i,\ell}^{\top}\bm z_{i,\ell'} \\&\times 
\mathbb I(
\bm w^{\top}\bm z_{i,\ell}\ge 0,\,
\bm w^{\top}\bm z_{i,\ell'}\ge 0
)
].
\end{split}
\end{equation}

By Lemma~\ref{lem_nonparallel_zi}, the bit-channel training inputs are almost surely pairwise non-parallel under the AWGN channel model. Applying \cite[Theorem~3.1]{du2018gradient} with the bit-channel sample size $D$ gives
\begin{equation}
\label{eq_app_lambda_positive}
\lambda_{0,i}
=
\lambda_{\min}(\mathbf H_i^\infty)
>
0
\qquad \text{almost surely.}
\end{equation}

\subsection{Initialization Concentration and Gram-Matrix Stability}

Let $\mathbf H_i(0)$ denote the empirical Gram matrix at initialization, i.e.,
\begin{equation}
\begin{split}
\label{eq_app_H0}
\big[\mathbf H_i(0)\big]_{\ell,\ell'}
\!=&
\frac{1}{B}
\bm z_{i,\ell}^{\top}\bm z_{i,\ell'}
\sum_{j=1}^{B}
\mathbb I(
\bm w_{i,j}(0)^{\top}\bm z_{i,\ell}\ge 0,\,\\&
\bm w_{i,j}(0)^{\top}\bm z_{i,\ell'}\ge 0
).
\end{split}
\end{equation}
\begin{lemma}
\label{lem_app_H0}
There exists a positive constant $C_1(C_z)>0$ such that if
\begin{equation}
\label{eq_app_m_cond_H0}
B
\ge
C_1(C_z)
\frac{D^2}{\lambda_{0,i}^2}
\log\!\frac{D}{\delta_i},
\end{equation}
then with probability at least $1-\delta_i$,
\begin{equation}
\label{eq_app_H0_close}
\|\mathbf H_i(0)-\mathbf H_i^\infty\|_2
\le
\frac{\lambda_{0,i}}{4},
\end{equation}
and consequently
\begin{equation}
\label{eq_app_H0_lambda}
\lambda_{\min}\!\big(\mathbf H_i(0)\big)
\ge
\frac{3}{4}\lambda_{0,i}.
\end{equation}
\end{lemma}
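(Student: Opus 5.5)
The plan is to prove the operator-norm bound entrywise, using Hoeffding's inequality and a union bound over the $D^2$ index pairs. The spectral norm is then controlled by the Frobenius norm. This is the standard argument of \cite[Lemma~3.1]{du2018gradient}, adapted to the bounded-input normalization $\max_\ell\|\bm z_{i,\ell}\|_2\le C_z$.

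\textbf{Step 1: concentration of each entry.} Fix a pair $(\ell,\ell')$. Write $[\mathbf H_i(0)]_{\ell,\ell'}=\frac1B\sum_{j=1}^B X_j$, where
\begin{equation*}
X_j=\bm z_{i,\ell}^\top\bm z_{i,\ell'}\,\mathbb I(\bm w_{i,j}(0)^\top\bm z_{i,\ell}\ge0,\;\bm w_{i,j}(0)^\top\bm z_{i,\ell'}\ge0).
\end{equation*}
Because the $\bm w_{i,j}(0)$ are i.i.d. $\mathcal N(\bm 0,\mathbf I)$, the $X_j$ are i.i.d. with mean $[\mathbf H_i^\infty]_{\ell,\ell'}$, by the definition in~\eqref{eq_app_Hinf}. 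By Cauchy--Schwarz each $X_j$ lies in an interval of length at most $C_z^2$. Hoeffding's inequality then gives, with probability at least $1-\delta'$,
\begin{equation*}
\big|[\mathbf H_i(0)-\mathbf H_i^\infty]_{\ell,\ell'}\big|\le C_z^2\sqrt{\frac{\log(2/\delta')}{2B}}.
\end{equation*}

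\textbf{Step 2: union bound and norm conversion.} Take $\delta'=\delta_i/D^2$. A union bound over the $D^2$ pairs makes all entrywise bounds hold simultaneously with probability at least $1-\delta_i$. On this event,
\begin{equation*}
\|\mathbf H_i(0)-\mathbf H_i^\infty\|_2\le\|\mathbf H_i(0)-\mathbf H_i^\infty\|_F\le D\,C_z^2\sqrt{\frac{\log(2D^2/\delta_i)}{2B}}.
\end{equation*}
We want the right-hand side to be at most $\lambda_{0,i}/4$. This holds once
\begin{equation*}
B\ge\frac{8C_z^4D^2}{\lambda_{0,i}^2}\log\frac{2D^2}{\delta_i}.
\end{equation*}
Since $D\ge1$ and $\delta_i<1$, we have $\log(2D^2/\delta_i)\le c\log(D/\delta_i)$ for an absolute constant $c$. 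The degenerate case $D=1$ only changes the constant, and one may use $\log(2D^2/\delta_i)\le3\log(2D/\delta_i)$ while absorbing the factor $2$ into $C_1$. Hence choosing $C_1(C_z)$ proportional to $C_z^4$ yields~\eqref{eq_app_H0_close} under condition~\eqref{eq_app_m_cond_H0}.

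\textbf{Step 3: eigenvalue consequence.} Weyl's inequality gives
\begin{equation*}
\lambda_{\min}(\mathbf H_i(0))\ge\lambda_{\min}(\mathbf H_i^\infty)-\|\mathbf H_i(0)-\mathbf H_i^\infty\|_2\ge\lambda_{0,i}-\lambda_{0,i}/4=\tfrac34\lambda_{0,i},
\end{equation*}
which is~\eqref{eq_app_H0_lambda}. Positivity of $\lambda_{0,i}$ is supplied by~\eqref{eq_app_lambda_positive}.

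The only mildly delicate point is the constant bookkeeping in Step 2, namely absorbing $\log(2D^2/\delta_i)$ into $\log(D/\delta_i)$ uniformly in $D\ge1$ and $\delta_i\in(0,1)$. If $D=1$ makes $\log(D/\delta_i)=\log(1/\delta_i)$ small relative to $\log 2$, I would still bound $\log(2/\delta_i)\le(1+\log2/\log(1/\delta_i))\log(1/\delta_i)$. If this is problematic when $\delta_i$ is close to $1$, I would instead state the condition with $\log(2D^2/\delta_i)$, or note that the theorem's width condition~\eqref{eq_width_condition_i} dominates it anyway. A cruder route that avoids the issue uses $\log x\le x$: the condition $B\ge C D^2/(\lambda_{0,i}^2\delta_i)$ is implied by~\eqref{eq_width_condition_i}.
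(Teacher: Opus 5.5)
Your proof is correct and follows the paper's route. Like the paper, you use entrywise Hoeffding with range $C_z^2$, a union bound over the $D^2$ index pairs, and Weyl's inequality, and your Frobenius-norm step makes explicit the spectral conversion that the paper leaves implicit.

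Your $D=1$ caveat is well founded, and the paper glosses over it too. For $\delta_i$ near $1$, the stated $\log(D/\delta_i)$ condition admits $B=1$, where $\|\mathbf H_i(0)-\mathbf H_i^\infty\|_2=\lambda_{0,i}$ surely, so $\log(2D^2/\delta_i)$ is the right form. One correction: your cruder $\log x\le x$ fallback gives $B\ge C D^4/(\lambda_{0,i}^2\delta_i)$, not $D^2$; this corrected condition is still implied by~\eqref{eq_width_condition_i}.
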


\begin{proof}
For every fixed pair $(\ell,\ell')$, the entry $[\mathbf H_i(0)]_{\ell,\ell'}$ is an average of independent random variables whose absolute values are bounded by $C_z^2$. Applying Hoeffding's inequality and a union bound over all $D^2$ pairs gives the spectral concentration bound, with the normalization dependence collected in $C_1(C_z)$. The minimum-eigenvalue bound follows by Weyl's inequality.
\end{proof}

Next, define the empirical Gram matrix at iteration $k$ by
\begin{equation}
\begin{split}
\label{eq_app_Hk}
\big[\mathbf H_i(k)\big]_{\ell,\ell'}
\!=&
\frac{1}{B}
\bm z_{i,\ell}^{\top}\bm z_{i,\ell'}
\sum_{j=1}^{B}
\mathbb I(
\bm w_{i,j}(k)^{\top}\bm z_{i,\ell}\ge 0,\, \\&
\bm w_{i,j}(k)^{\top}\bm z_{i,\ell'}\ge 0
).
\end{split}
\end{equation}
\begin{lemma}
\label{lem_app_H_stable}
There exists a positive constant $c(C_z)>0$ such that the following holds with probability at least $1-\delta_i$. If
\begin{equation}
\label{eq_app_R_def}
R_i
=
\frac{c(C_z)\,\delta_i\,\lambda_{0,i}}{D^2},
\end{equation}
and if the input-to-hidden weights satisfy
\begin{equation}
\label{eq_app_R_condition}
\|\bm w_{i,j}-\bm w_{i,j}(0)\|_2 \le R_i,
\qquad j\in[B],
\end{equation}
then the corresponding Gram matrix $\mathbf H_i$ obeys
\begin{equation}
\label{eq_app_H_close}
\|\mathbf H_i-\mathbf H_i(0)\|_2
<
\frac{\lambda_{0,i}}{4},
\end{equation}
and hence
\begin{equation}
\label{eq_app_H_lambda_lower}
\lambda_{\min}(\mathbf H_i)
>
\frac{\lambda_{0,i}}{2}.
\end{equation}
\end{lemma}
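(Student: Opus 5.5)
This is the Gram-stability lemma of Du et al., adapted to the present normalization. The plan is to bound the perturbation entrywise and then pass to the spectral norm through the Frobenius norm.

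\textbf{Step 1: reduce to sign flips.} For each pair $(\ell,\ell')$,
\[
\big|[\mathbf H_i]_{\ell,\ell'}-[\mathbf H_i(0)]_{\ell,\ell'}\big|
\le \frac{C_z^2}{B}\sum_{j=1}^{B}\big(\mathbb I(A_{j,\ell})+\mathbb I(A_{j,\ell'})\big).
\]
Here $A_{j,\ell}$ is the event that the activation $\mathbb I(\bm w_{i,j}^\top\bm z_{i,\ell}\ge0)$ differs from its value at initialization. This holds because a product of two indicators changes only if at least one factor changes.

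\textbf{Step 2: deterministic containment of the flip events.} If $\|\bm w_{i,j}-\bm w_{i,j}(0)\|_2\le R_i$, a flip at sample $\ell$ requires $|\bm w_{i,j}(0)^\top\bm z_{i,\ell}|\le R_i\|\bm z_{i,\ell}\|_2$. So $A_{j,\ell}\subseteq\{|\bm w_{i,j}(0)^\top\bm z_{i,\ell}|\le R_i\|\bm z_{i,\ell}\|_2\}$, an event that depends only on the initialization. This lets a single high-probability event cover every admissible $\mathbf W_i$ in the ball simultaneously. Since $\bm w_{i,j}(0)^\top\bm z_{i,\ell}\sim\mathcal N(0,\|\bm z_{i,\ell}\|_2^2)$, anti-concentration of the Gaussian gives
\[
\mathbb P(A_{j,\ell})\le \frac{2R_i}{\sqrt{2\pi}}.
\]
If the inputs are normalized so that $\|\bm z_{i,\ell}\|_2$ is bounded below, this bound is independent of the sample. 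Otherwise I would absorb the dependence on the ratio into $c(C_z)$, or use the scale-invariance of the indicator, since the event is homogeneous in $\bm z_{i,\ell}$.

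\textbf{Step 3: expectation and Markov.} Summing over pairs,
\[
\mathbb E\big[\|\mathbf H_i-\mathbf H_i(0)\|_F\big]\le \mathbb E\sum_{\ell,\ell'}\big|\cdot\big|\le \frac{4C_z^2D^2R_i}{\sqrt{2\pi}}.
\]
The right-hand side is computed with the upper-bound indicators from Step 2, so it is a single random variable dominating the perturbation uniformly over the ball. Markov's inequality then gives, with probability at least $1-\delta_i$,
\[
\|\mathbf H_i-\mathbf H_i(0)\|_2\le\|\mathbf H_i-\mathbf H_i(0)\|_F\le \frac{4C_z^2D^2R_i}{\sqrt{2\pi}\,\delta_i}.
\]
Choosing $c(C_z)<\sqrt{2\pi}/(16C_z^2)$ in $R_i=c\,\delta_i\lambda_{0,i}/D^2$ makes this strictly less than $\lambda_{0,i}/4$, which is \eqref{eq_app_H_close}.

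\textbf{Step 4: eigenvalue bound.} Intersect with the event of Lemma~\ref{lem_app_H0}, adjusting the failure probabilities by constant factors that are absorbed into the constants. Weyl's inequality then gives
\[
\lambda_{\min}(\mathbf H_i)>\tfrac34\lambda_{0,i}-\tfrac14\lambda_{0,i}=\tfrac12\lambda_{0,i}.
\]

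\textbf{Main obstacle.} The delicate point is Steps 2–3: the flip probability must be controlled uniformly over all weights in the ball and not only at a fixed perturbation, so the bound has to go through the initialization-only superset events. The Gaussian anti-concentration must also be normalized correctly relative to $\|\bm z_{i,\ell}\|_2$, so that the constant depends only on $C_z$.
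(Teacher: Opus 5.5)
Your proposal is correct and follows essentially the argument the paper invokes. The paper's proof only cites the anti-concentration and perturbation argument of \cite[Lemma~3.2]{du2018gradient} with sample size $D$, and your Steps 1--4 reproduce that argument with the extra $C_z^2$ factor: flip events contained in the initialization-only events $\{|\bm w_{i,j}(0)^\top\bm z_{i,\ell}|\le R_i\|\bm z_{i,\ell}\|_2\}$, the expectation of the entrywise sum, Markov's inequality, and then Weyl's inequality combined with Lemma~\ref{lem_app_H0}. Two minor points: the hedge in Step 2 is unnecessary, because that event has probability $\mathbb P(|\xi|\le R_i)$ with $\xi\sim\mathcal N(0,1)$ regardless of $\|\bm z_{i,\ell}\|_2$; and, as in the paper, the eigenvalue conclusion tacitly relies on the width condition of Lemma~\ref{lem_app_H0}.
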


\begin{proof}
Applying the anti-concentration and matrix perturbation argument of \cite[Lemma~3.2]{du2018gradient} with sample size $D$ proves the claim.
\end{proof}

\subsection{Prediction Error Dynamics}

Define the training prediction vector
\begin{equation}
\label{eq_app_pred_vec}
\tilde{\bm u}_i(k)
=
\big[
\mathrm F_i(\mathbf{W}_i(k),\bm a_i,\bm z_{i,1}),
\dots,
\mathrm F_i(\mathbf{W}_i(k),\bm a_i,\bm z_{i,D})
\big]^\top,
\end{equation}
and let
\begin{equation}
\label{eq_app_error_vec}
\bm e_i(k)\triangleq \tilde{\bm u}_i(k)-\bm t_i.
\end{equation}

Following the discrete-time decomposition in \cite[Section~4.1]{du2018gradient}, we separate the one-step prediction increment into the contribution of neurons whose activation patterns remain unchanged on each sample and the contribution of neurons whose activation patterns may change. This yields
\begin{equation}
\label{eq_app_prediction_increment}
\tilde{\bm u}_i(k+1)-\tilde{\bm u}_i(k)
=
-\frac{\eta_i}{D}\mathbf H_i(k)\bm e_i(k)+\bm \xi_i(k),
\end{equation}
where $\bm \xi_i(k)$ is the perturbation induced by neurons whose activation patterns may change between two consecutive iterations.

Since the normalized loss differs from the loss in \cite{du2018gradient} by a factor of $1/D$, the perturbation estimate in \cite[Section~4.1]{du2018gradient} carries over directly, with the effective step size replaced by $\eta_i/D$. Therefore, as long as the parameters remain within the radius $R_i$ in \eqref{eq_app_R_def}, and if
\begin{equation}
\label{eq_app_eta_cond}
\eta_i
\le
C_2\frac{\lambda_{0,i}}{D},
\end{equation}
for a sufficiently small constant $C_2>0$, the perturbation term is dominated by the contraction induced by $\mathbf H_i(k)$, and one obtains
\begin{equation}
\label{eq_app_error_contraction}
\|\bm e_i(k+1)\|_2^2
\le
\left(
1-\frac{\eta_i\lambda_{0,i}}{2D}
\right)
\|\bm e_i(k)\|_2^2.
\end{equation}

Dividing both sides by $D$ yields
\begin{equation}
\label{eq_app_empirical_mse_contraction}
\widehat{\mathcal E}_{\mathcal{S}_i}(k+1)
\le
\left(
1-\frac{\eta_i\lambda_{0,i}}{2D}
\right)
\widehat{\mathcal E}_{\mathcal{S}_i}(k),
\end{equation}
which by induction gives
\begin{equation}
\label{eq_app_empirical_mse_final}
\widehat{\mathcal E}_{\mathcal{S}_i}(k)
\le
\left(
1-\frac{\eta_i\lambda_{0,i}}{2D}
\right)^k
\widehat{\mathcal E}_{\mathcal{S}_i}(0),
\qquad k=0,1,2,\ldots.
\end{equation}

\subsection{Deviation of the Weights From Initialization}

We now bound the deviation of each hidden weight vector from its initialization. Since
\begin{equation}
\label{eq_app_weight_telescoping}
\bm w_{i,j}(k+1)-\bm w_{i,j}(0)
=
-\eta_i
\sum_{k'=0}^{k}
\frac{\partial \mathcal L_i(\mathbf{W}_i(k'),\bm a_i)}{\partial \bm w_{i,j}(k')},
\end{equation}
we have
\begin{equation}
\label{eq_app_weight_telescoping_norm}
\|\bm w_{i,j}(k+1)-\bm w_{i,j}(0)\|_2
\le
\eta_i
\sum_{k'=0}^{k}
\left\|
\frac{\partial \mathcal L_i(\mathbf{W}_i(k'),\bm a_i)}{\partial \bm w_{i,j}(k')}
\right\|_2.
\end{equation}

Using \eqref{eq_app_grad_wr}, the boundedness of the hidden-to-output coefficients, and $\|\bm z_{i,\ell}\|_2\le C_z$, we obtain
\begin{equation}
\begin{split}
\label{eq_app_grad_bound}
\left\|
\frac{\partial \mathcal L_i(\mathbf{W}_i(k'),\bm a_i)}{\partial \bm w_{i,j}(k')}
\right\|_2
&\le
\frac{1}{D\sqrt B}
\sum_{\ell=1}^{D}
\big|[\bm e_i(k')]_{\ell}\big|
\|\bm z_{i,\ell}\|_2
\\&\le
\frac{C_z\sqrt D}{D\sqrt B}
\|\bm e_i(k')\|_2.
\end{split}
\end{equation}

By \eqref{eq_app_error_contraction},
\begin{equation}
\label{eq_app_error_norm_decay}
\|\bm e_i(k')\|_2
\le
\left(
1-\frac{\eta_i\lambda_{0,i}}{2D}
\right)^{k'/2}
\|\bm e_i(0)\|_2.
\end{equation}
Substituting \eqref{eq_app_grad_bound} and \eqref{eq_app_error_norm_decay} into \eqref{eq_app_weight_telescoping_norm} gives
\begin{equation}
\begin{split}
\label{eq_app_weight_series}
\|\bm w_{i,j}(k+1)-\bm w_{i,j}(0)\|_2
\le&
\sum_{k'=0}^{k}
\left(
1-\frac{\eta_i\lambda_{0,i}}{2D}
\right)^{k'/2} \\&\times
\frac{\eta_i C_z\|\bm e_i(0)\|_2}{\sqrt{DB}}.
\end{split}
\end{equation}

Applying the geometric-series bound
\begin{equation}
\label{eq_app_geom_series}
\sum_{k'=0}^{\infty}
\left(
1-\frac{\eta_i\lambda_{0,i}}{2D}
\right)^{k'/2}
\le
\frac{4D}{\eta_i\lambda_{0,i}},
\end{equation}
we obtain
\begin{equation}
\begin{split}
\label{eq_app_weight_bound_final}
\|\bm w_{i,j}(k+1)-\bm w_{i,j}(0)\|_2
&\le
\frac{4C_z\sqrt D\,\|\bm e_i(0)\|_2}{\sqrt B\,\lambda_{0,i}}
\\&=
\frac{4C_z\sqrt D\,\|\bm t_i-\tilde{\bm u}_i(0)\|_2}{\sqrt B\,\lambda_{0,i}}.
\end{split}
\end{equation}
This estimate leaves the initialization error as the final ingredient.

\subsection{Initialization Error Bound}

\begin{lemma}
\label{lem_app_init_error_new}
There exists a positive constant $C_3(C_z)>0$ such that, with probability at least $1-\delta_i$ over the random initialization,
\begin{equation}
\label{eq_app_init_error_bound_new}
\|\bm t_i-\tilde{\bm u}_i(0)\|_2^2
\le
C_3(C_z)\frac{D}{\delta_i}.
\end{equation}
\end{lemma}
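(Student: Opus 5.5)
The bound follows from a second-moment computation over the random initialization, followed by Markov's inequality. Write
\[
\|\bm t_i-\tilde{\bm u}_i(0)\|_2^2\le 2\|\bm t_i\|_2^2+2\|\tilde{\bm u}_i(0)\|_2^2 .
\]
Since $|t_{i,\ell}|\le1$, the first term is at most $2D$ deterministically. It therefore suffices to bound $\mathbb E\|\tilde{\bm u}_i(0)\|_2^2$ by a constant multiple of $D$, with the constant depending only on $C_z$.

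For a fixed training input $\bm z_{i,\ell}$, the prediction at initialization is
\[
[\tilde{\bm u}_i(0)]_\ell=B^{-1/2}\sum_{j=1}^{B}a_{i,j}\,\sigma(\bm w_{i,j}(0)^\top\bm z_{i,\ell}).
\]
The output signs $a_{i,j}$ are i.i.d. Rademacher and independent of $\mathbf W_i(0)$. Hence the cross terms vanish in expectation, and
\[
\mathbb E[\tilde u_{i,\ell}(0)^2]=\frac1B\sum_{j=1}^{B}\mathbb E\big[\sigma(\bm w_{i,j}(0)^\top\bm z_{i,\ell})^2\big].
\]
Each summand is at most $\mathbb E[(\bm w^\top\bm z_{i,\ell})^2]=\|\bm z_{i,\ell}\|_2^2\le C_z^2$, because $\bm w\sim\mathcal N(\bm 0,\mathbf I)$ gives $\bm w^\top\bm z\sim\mathcal N(0,\|\bm z\|_2^2)$. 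Summing over $\ell\in[D]$ yields $\mathbb E\|\tilde{\bm u}_i(0)\|_2^2\le C_z^2D$, and therefore
\[
\mathbb E\|\bm t_i-\tilde{\bm u}_i(0)\|_2^2\le(2+2C_z^2)D .
\]
Equivalently, one can expand $\mathbb E\|\bm t_i-\tilde{\bm u}_i(0)\|_2^2=\|\bm t_i\|_2^2+\mathbb E\|\tilde{\bm u}_i(0)\|_2^2$. This uses $\mathbb E[\tilde{\bm u}_i(0)]=\bm 0$, which holds by sign symmetry of $a_{i,j}$, and it avoids the factor $2$.

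Markov's inequality applied to the nonnegative variable $\|\bm t_i-\tilde{\bm u}_i(0)\|_2^2$ then gives, with probability at least $1-\delta_i$,
\[
\|\bm t_i-\tilde{\bm u}_i(0)\|_2^2\le\frac{(1+C_z^2)D}{\delta_i}.
\]
This is \eqref{eq_app_init_error_bound_new} with $C_3(C_z)=1+C_z^2$ (or $2+2C_z^2$ if one uses the cruder split). The constant is independent of $B$, $D$, and $i$, as required for the later width condition.

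There is no real obstacle. The one point needing care is that the training inputs $\bm z_{i,\ell}$ must be treated as fixed, i.e.\ conditioned on. The expectation is taken only over $(\mathbf W_i(0),\bm a_i)$, and the independence of $\bm a_i$ from $\mathbf W_i(0)$ is what kills the cross terms. This $D/\delta_i$ scaling, combined with \eqref{eq_app_weight_bound_final} and the radius $R_i$ of Lemma~\ref{lem_app_H_stable}, is also what produces the $D^6/(\lambda_{0,i}^4\delta_i^3)$ width requirement in \eqref{eq_width_condition_i}.
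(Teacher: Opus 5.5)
Your proof is correct and follows essentially the same route as the paper: the split $\|\bm t_i-\tilde{\bm u}_i(0)\|_2^2\le 2D+2\|\tilde{\bm u}_i(0)\|_2^2$, Rademacher cancellation of the cross terms so that $\mathbb E\|\tilde{\bm u}_i(0)\|_2^2=\mathcal O(C_z^2D)$, and then Markov's inequality. The differences are cosmetic: the paper uses the exact identity $\mathbb E[\sigma(G)^2]=\tfrac12\mathbb E[G^2]$ and applies Markov to $\|\tilde{\bm u}_i(0)\|_2^2$ alone, implicitly absorbing $2D\le 2D/\delta_i$, whereas you apply Markov to the whole error.
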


\begin{proof}
Since $|t_{i,\ell}|\le 1$ for all $\ell\in[D]$,
\begin{equation}
\label{eq_app_init_expand_new}
\|\bm t_i-\tilde{\bm u}_i(0)\|_2^2
\le
2\|\bm t_i\|_2^2
+
2\|\tilde{\bm u}_i(0)\|_2^2
\le
2D+2\|\tilde{\bm u}_i(0)\|_2^2.
\end{equation}
It remains to control $\|\tilde{\bm u}_i(0)\|_2^2$.

For a fixed sample $\bm z_{i,\ell}$, the initialization output is
\begin{equation}
\label{ini_output}
\mathrm F_i(\mathbf W_i(0),\bm a_i,\bm z_{i,\ell})
=
\frac{1}{\sqrt B}
\sum_{j=1}^{B}
a_{i,j}\sigma\!\bigl(\bm w_{i,j}(0)^\top \bm z_{i,\ell}\bigr).
\end{equation}
Conditioned on $\bm z_{i,\ell}$, the summands are i.i.d.\ centered after multiplication by the independent Rademacher signs $a_{i,j}\in\{-1,1\}$, since
\begin{equation}
\label{expect_ini_remain}
\mathbb E_{a_{i,j}}
\bigl[
a_{i,j}\sigma(\bm w_{i,j}(0)^\top \bm z_{i,\ell})
\,\big|\,
\bm w_{i,j}(0),\bm z_{i,\ell}
\bigr]
=0.
\end{equation}
Hence
\begin{equation}
\label{expect_ini}
\mathbb E\!\left[
\mathrm F_i(\mathbf W_i(0),\bm a_i,\bm z_{i,\ell})^2
\,\middle|\,
\bm z_{i,\ell}
\right]
=
\mathbb E\!\left[
\sigma(\bm w^\top \bm z_{i,\ell})^2
\,\middle|\,
\bm z_{i,\ell}
\right],
\end{equation}
where $\bm w\sim\mathcal N(\bm 0,\mathbf I)$. Since $\bm w^\top \bm z_{i,\ell}\sim\mathcal N(0,\|\bm z_{i,\ell}\|_2^2)$, and for a centered Gaussian scalar $G$ one has
\begin{equation}
\label{gaus_expect}
\mathbb E[\sigma(G)^2]
=
\frac{1}{2}\mathbb E[G^2],
\end{equation}
it follows that
\begin{equation}
\label{expect_ini_new}
\mathbb E\!\left[
\mathrm F_i(\mathbf W_i(0),\bm a_i,\bm z_{i,\ell})^2
\,\middle|\,
\bm z_{i,\ell}
\right]
=
\frac{1}{2}\|\bm z_{i,\ell}\|_2^2.
\end{equation}
Under the standing assumption $\|\bm z_{i,\ell}\|_2\le C_z$, the conditional second moment above is at most $C_z^2/2$. Therefore
\begin{equation}
\label{bound_sec_moment}
\mathbb E\|\tilde{\bm u}_i(0)\|_2^2
=
\sum_{\ell=1}^{D}
\mathbb E\!\left[
\mathrm F_i(\mathbf W_i(0),\bm a_i,\bm z_{i,\ell})^2
\right]
\le
\frac{C_z^2D}{2}.
\end{equation}
Applying Markov's inequality yields
\begin{equation}
\label{bound_sec_final}
\|\tilde{\bm u}_i(0)\|_2^2
\le
\frac{C_z^2D}{2\delta_i},
\end{equation}
with probability at least $1-\delta_i$. Combining this with \eqref{eq_app_init_expand_new} proves \eqref{eq_app_init_error_bound_new}.
\end{proof}

\subsection{Width Condition and Completion of the Proof}

It remains to ensure that the deviation bound in \eqref{eq_app_weight_bound_final} is sufficiently small to keep all iterates inside the stability radius in \eqref{eq_app_R_def}. A sufficient condition is
\begin{equation}
\label{eq_app_Rprime_less_R}
\frac{4C_z\sqrt D\,\|\bm t_i-\tilde{\bm u}_i(0)\|_2}{\sqrt B\,\lambda_{0,i}}
<
\frac{c(C_z)\,\delta_i\,\lambda_{0,i}}{D^2}.
\end{equation}
Equivalently,
\begin{equation}
\label{eq_app_m_intermediate}
B
>
C_4(C_z)
\frac{D^5\|\bm t_i-\tilde{\bm u}_i(0)\|_2^2}{\lambda_{0,i}^4\delta_i^2},
\end{equation}
for some positive constant $C_4(C_z)>0$. Using Lemma~\ref{lem_app_init_error_new}, with probability at least $1-\delta_i$,
\begin{equation}
\label{eq_app_m_after_init}
B
\ge
C_5(C_z)
\frac{D^6}{\lambda_{0,i}^4\delta_i^3},
\end{equation}
for a sufficiently large constant $C_5(C_z)$ ensures \eqref{eq_app_Rprime_less_R}. Apply Lemmas~\ref{lem_app_H0}, \ref{lem_app_H_stable}, and \ref{lem_app_init_error_new} with failure probability $\delta_i/3$ in each occurrence. Replacing $\delta_i$ by $\delta_i/3$ changes the constants in the displayed sufficient-width and step-size conditions, and a union bound gives an overall success probability of at least $1-\delta_i$. On this event, the Gram matrix remains uniformly well-conditioned throughout the optimization trajectory, and the contraction inequality \eqref{eq_app_empirical_mse_final} holds for all $k\ge 0$.
Combining \eqref{eq_app_empirical_mse_final} and \eqref{eq_app_weight_bound_final} completes the proof of Theorem~\ref{thm_training_convergence_i}.

\vfill

\end{document}